\documentclass{article}

\usepackage{microtype}
\usepackage{graphicx}
\usepackage{subfigure}
\usepackage{booktabs} 
\usepackage{nicefrac}
\usepackage{hyperref}
\usepackage{multicol}
\usepackage{wrapfig}

\usepackage{algorithm,algpseudocode}

\usepackage[nonatbib,preprint]{neurips_2021}

\usepackage[style=alphabetic,backend=bibtex,sorting=anyt]{biblatex}
\addbibresource{constraint.bib}

\usepackage[utf8]{inputenc} 
\usepackage[T1]{fontenc}    
\RequirePackage{amsthm,amsmath,amsfonts,amssymb}

\usepackage{graphicx}
\usepackage{enumerate}
\usepackage{verbatim}
\usepackage{mathrsfs}
\usepackage{paralist}
\usepackage{pstool}
\usepackage{booktabs}
\usepackage{tikz}
\usepackage{mathtools}
\usepackage{bbm}
\usepackage{url}

\newtheorem{Assumption}{Assumption}

\newtheorem{Lemma}{Lemma}
\newtheorem{Theorem}{Theorem}
\newtheorem{Corollary}{Corollary}

\newtheorem{Definition}{Definition}

\DeclareMathOperator{\pd}{\partial}

\DeclareMathOperator{\vep}{\varepsilon}

\DeclareMathOperator{\rank}{rank}

\newcommand{\pN}{\mathcal{N}}
\newcommand{\mv}[1]{{\boldsymbol{\mathrm{#1}}}}


\newcommand{\A}{\mathscr{C}}
\newcommand{\Ac}{\mathscr{U}}

\newcommand{\csolve}[1]{\mathcal{S}_{#1}}
\newcommand{\cchol}[1]{\mathcal{C}_{#1}}

\graphicspath{{../tex/figs/}}

\title{Efficient methods for Gaussian Markov random fields under sparse linear constraints}
\setlength{\textfloatsep}{0pt}
\setlength{\belowcaptionskip}{10pt}

%

\author{%
	David~Bolin\\
	King Abdullah University of \\Science and Technology\\
	\texttt{david.bolin@kaust.edu.sa}\\
	\And
	Jonas~Wallin\\
	Department of Statistics,\\
	 Lund University\\
	\texttt{jonas.wallin@stat.lu.se}\\
}

\begin{document}

\maketitle

\vskip 0.3in

\begin{abstract}
%
%
%
%
%

Methods for inference and simulation of linearly constrained Gaussian Markov Random Fields (GMRF) are computationally prohibitive when the number of constraints is large. In some cases, such as for intrinsic GMRFs, they may even be unfeasible. We propose a new class of methods to overcome these challenges in the common case of sparse constraints, where one has a large number of constraints and each only involves a few elements. Our methods rely on a basis transformation into blocks of constrained versus non-constrained subspaces, and we show that the methods greatly outperform existing alternatives in terms of computational cost. By combining the proposed methods with the stochastic partial differential equation approach for Gaussian random fields, we also show how to formulate Gaussian process regression with linear constraints in a GMRF setting to reduce computational cost. This is illustrated in two applications with simulated data.

\end{abstract}

\section{Introduction}

Linearly constrained Gaussian processes have recently gained attention, especially for Gaussian process regression where the model should obey some underlying physical principle such as conservation laws or equilibrium conditions  \cite{NIPS2010_4179,Sarkka2011,Wahlstrom2013,NIPS2017_6721,NIPS2018_7483,Solin2018,Hegermann2021}. A well-known challenge with these models, and Gaussian processes in general, is their high computational cost for inference and prediction in the case of big data sets \cite{6884588,Cong2017hyper,Jidling2018}. 
One way to reduce computational burden is to impose conditional independence assumptions. In fact, conditional independence between random variables is often explicitly or implicitly assumed in large classes of statistical models including Markov processes, hierarchical models, and graphical models. The assumption typically increases the model's interpretability and facilitates computationally efficient methods for inference \cite{rue09}.
Gaussian variables with conditional independence properties are known as Gaussian Markov random fields (GMRFs), and these are widely used in areas ranging from image analysis \cite{penny2005bayesian} to spatial statistics \cite{Bolin2009} and time series analysis \cite{rue2005gaussian}. GMRFs also arise as computationally efficient approximations of certain Gaussian processes  \cite{lindgren11}, which is a fundamental modeling tool in both machine learning and statistics. 
In particular, such approximations in combination with the integrated nested Laplace approximation (INLA) methodology \cite{rue09} made latent GMRFs widely used in the applied sciences \cite{Bakka2018}. GMRFs also have  connections with convolutional neural networks leading to recent considerations of  deep GMRFs~\cite{siden2020}.

The key feature of GMRFs that reduces computational cost is sparsity. Specifically, a GMRF
\begin{equation}\label{eq:distX}
	\mv{X} = [X_1,\ldots,X_n]^\top \sim \mathcal{N}\left( \mv{\mu}, \mv{Q}^{-1} \right),
\end{equation}
has a sparse precision (inverse covariance) matrix $\mv{Q}$ which  enables the use of sparse matrix techniques for computationally efficient sampling and statistical inference.
The sparsity is caused by conditional independence assumptions since $Q_{ij} = 0$ if and only if the two variables $X_i$ and $X_j$ are  independent conditionally on all other variables in $\mv{X}$ \cite[see][Chapter 2]{rue2005gaussian}. 

For a GMRF $\mv{X}$, a set of $k$ linear constraints can be formulated as
\begin{equation}\label{eq:contraint}
	\mv{A}\mv{X} = \mv{b},
\end{equation}
where each row in the $k\times n$ matrix $\mv{A}$ and the vector $\mv{b}$ encodes a constraint on $\mv{X}$.  For example, a sum-to-zero constraint $\sum_{i=1}^n X_i = 0$ can be written in this way, which is commonly used in hierarchical models to ensure identifiability.  Observations of GMRFs can also be formulated as linear constraints, where, e.g.,  a point observation of $X_i$ is  the simple constraint $X_i=x_i$. 
These  deterministic restrictions are often referred to as hard constraints. If \eqref{eq:contraint} is assumed to hold up to Gaussian noise, i.e., $\mv{A}\mv{X} \sim \pN(\mv{b},\sigma^2\mv{I})$, then the constraints are instead referred to as soft constraints. This scenario is common when GMRFs are incorporated in hierarchical models, where the soft constraints represent noisy observations. 

The problem with adding hard constraints to GMRFs is that it can remove the computational advantages of the non-constrained model. 
Specifically, current methods for GMRFs with hard constraints have a computational cost that scales cubically in the number of constraints. 
This will thus be prohibitive when there are many constraints, which for example is common for constrained Gaussian processes like those considered by \textcite{NIPS2017_6721}.  

The code for reproducing all results are available at can be found at \url{https://github.com/JonasWallin/CB/} 

{\bf Summary of contributions.}
The main contribution of this work is the formulation of a novel class of computationally efficient methods for 
linearly constrained GMRFs in tasks such as parameter estimation and simulation. 
These methods  explore the conditional independence structure to reduce computational costs 
compared to traditional methods for situations with large numbers of constraints. 
The focus is in particular on the case, referred to as sparse hard constraints, when each constraint only involves a few number of variables so that $\mv{A}$ is sparse.
For this case, the main idea is to perform a change of basis so that 
the constraints are simpler to enforce in the transformed basis. 
In order to use these models for Gaussian process regression, 
the methods are also generalized to GMRF models with both hard and soft constraints. 
An important feature of the new class of methods, that previous approaches lack, 
is its applicability to intrinsic GMRFs,  which are improper in the sense that a set of eigenvalues of $\mv{Q}$ is zero. 
This makes their distributions invariant to shifts in certain directions, 
which is a useful property for prior distributions of Bayesian models in areas such as medical image analysis \cite{penny2005bayesian} and geostatistics \cite{Bolin2009}. 
The final contribution is the derivation of GMRFs for constrained Gaussian processes, 
by combining the proposed methods with the stochastic partial differential equation (SPDE) approach by \cite{lindgren11} and the nested SPDE methods by \cite{bolin11}. The combined approach is highly computationally efficient compared to standard covariance-based methods, as illustrated in two simulation studies.

{\bf Outline.} In Section~\ref{sec:old_methods}, the problem is introduced in more detail and the most commonly used methods for sampling and likelihood computations for GMRFs are reviewed.  Section~\ref{sec:new_method} introduces the new methods for GMRFs with sparse hard constraints. 
These methods are extended to the case with both hard and soft constraints  in Section~\ref{sec:extension}, followed by the  GMRF methods for constrained Gaussian processes in Section~\ref{sec:constrainedGP}. The methods are illustrated nuumerically in Section~\ref{sec:illustrations}. 
A discussion closes the article, which is supported by three appendices containing proofs and technical details.

\section{Standard methods for GMRFs under hard constraints}\label{sec:old_methods}
Hard  constraints can be divided into interacting and non-interacting constraints. In the latter, \eqref{eq:contraint}  specifies a constraint on a subset of the variables in $\mv{X}$ so that $\mv{AX} = \mv{b}$ can be written as $\mv{X}_c = \mv{b}_c$, where $c$ denotes a subset of the variables. Specifically, let $\mv{X}_u$ denote the remaining variables, then
$$
\mv{X} =  \begin{bmatrix}
	\mv{X}_c\\
	\mv{X}_{u}
\end{bmatrix}	\sim \mathcal{N}\left(
\begin{bmatrix}
	\mv{\mu}_c\\
	\mv{\mu}_{u}
\end{bmatrix}	
,
\begin{bmatrix}
	\mv{Q}_{cc} & \mv{Q}_{cu}\\
	\mv{Q}_{uc} & \mv{Q}_{uu}
\end{bmatrix}^{-1}	
\right) \quad
\mbox{and} \quad
\mv{X}|\mv{AX} = \mv{b} \sim \mathcal{N}\left(
\begin{bmatrix}
	\mv{b}_c\\
	\mv{\mu}_{u|c}
\end{bmatrix}	
,
\begin{bmatrix}
	\mv{0} & \mv{0}\\
	\mv{0} & \mv{Q}_{uu}^{-1}
\end{bmatrix}	
\right),
$$
where $\mv{\mu}_{u|c} = \mv{\mu}_u-\mv{Q}_{uu}^{-1}\mv{Q}_{uc}(\mv{b_c}-\mv{\mu}_c)$. Thus, we can split the variables into two subsets and treat the unconditioned variables separately.
The more difficult and interesting situation is the case of  interacting hard constraints where a simple split of the variables is not possible. From now on, we will assume that we are in this scenario. 

Our aim is to construct methods for sampling  from the distribution of $\mv{X}| \mv{AX} = \mv{b}$ and for evaluating its log-likelihood function. It is straightforward to show that $\mv{X}| \mv{AX} = \mv{b}\sim \mathcal{N}(\widehat{\mv{\mu}},\widehat{\mv{\Sigma}})$, where 
$\widehat{\mv{\mu}} = \mv{\mu} - \mv{Q}^{-1}\mv{A}^\top(\mv{A}\mv{Q}^{-1}\mv{A}^\top)^{-1}(\mv{A}\mv{\mu}-\mv{b})$ and
$\widehat{\mv{\Sigma}} = \mv{Q}^{-1} - \mv{Q}^{-1}\mv{A}^\top(\mv{A}\mv{Q}^{-1}\mv{A}^\top)^{-1}\mv{A}\mv{Q}^{-1}$.
Since $\widehat{\mv{\Sigma}}$ has rank $n-k$, likelihood evaluation and sampling is in general expensive. 
For example, one way is to use an eigenvalue decomposition of $\widehat{\mv{\Sigma}}$ \cite[see][Chapter 2.3.3]{rue2005gaussian}. 
However, this procedure  is not practical since it cannot take advantage of the sparsity of $\mv{Q}$.
Also, for intrinsic GMRFs, $\widehat{\mv{\Sigma}}$ and $\widehat{\mv{\mu}}$ cannot be constructed through the expressions above since $\mv{Q}^{-1}$ is unbounded. 

A commonly used method for sampling under hard linear constraints, sometimes referred to as conditioning by kriging \cite{rue2001sampling}, is to first sample $\mv{X} \sim \mathcal{N}\left( \mv{\mu}, \mv{Q}^{-1} \right)$ and then correct for the constraints by using
	$\mv{X}^* = \mv{X} - \mv{Q}^{-1}\mv{A}^\top(\mv{A}\mv{Q}^{-1}\mv{A}^\top)^{-1}(\mv{AX}- \mv{b})$
as a sample from the conditional distribution. 
Here the cost of sampling $\mv{X}$ is $\cchol{\mv{Q}} + \csolve{\mv{Q}}$, where $\cchol{\mv{Q}}$ denotes the computational cost of a sparse Cholesky factorization $\mv{Q} = \mv{R}^\top\mv{R}$ and $\csolve{\mv{Q}}$ the cost of solving $\mv{Rx} = \mv{u}$ for  $\mv{x}$ given $\mv{u}$. Adding the cost for the correction step,  the total cost of the method is 
$
\mathcal{O}(\cchol{\mv{Q}} + (k+2)\csolve{\mv{Q}} + k^3).
$

Let  $\pi_{\mv{Ax}}(\cdot)$ denote the density of  $\mv{Ax}\sim\pN(\mv{A\mu},\mv{A}\mv{Q}^{-1}\mv{A}^\top)$, then  
the likelihood of $\mv{X}|\mv{AX}=\mv{b}$ can be computed through the expression 
\begin{equation}\label{eq:old_likelihood}
	\pi(\mv{x}|\mv{Ax}=\mv{b}) = \frac{\pi_{\mv{Ax}|\mv{x}}^*(\mv{b}|\mv{x})\pi(\mv{x})}{\pi_{\mv{Ax}}(\mv{b})}, 
\end{equation}
where $\pi_{\mv{Ax}|\mv{x}}^*(\mv{b}|\mv{x})=\mathbb{I}(\mv{Ax}=\mv{b})|\mv{A}\mv{A}^\top|^{-1/2}$ and $\mathbb{I}(\mv{Ax}=\mv{b})$ denotes the indicator function with $\mathbb{I}(\mv{Ax}=\mv{b}) = 1$ if $\mv{Ax}=\mv{b}$  and $\mathbb{I}(\mv{Ax}=\mv{b}) = 0$ otherwise. 
This result is formulated in \cite{rue2001sampling} and we provide further details in Appendix~\ref{app:conditional} by showing that \eqref{eq:old_likelihood} is a density with respect to the Lebesgue measure on the level set $\{\mv{x}:\mv{Ax}=\mv{b}\}$.
%
The computational cost of evaluating the likelihood using this formulation is 
$
\mathcal{O}(\cchol{\mv{Q}} + (k+1)\csolve{\mv{Q}} + k^3).
$

 Note that these methods only work efficiently for a small number of constraints, because of the term $k^3$ in the computational costs, and for proper GMRFs. In the case of intrinsic GMRFs we cannot work with $\mv{A}\mv{Q}^{-1}\mv{A}$ due to the rank deficiency of $\mv{Q}$. 

\section{The basis transformation method}\label{sec:new_method}
In this section, we propose the new class of methods in two steps. We first derive a change of basis in Section~\ref{sec:basis}, and then use this to formulate the desired conditional distributions in Section~\ref{sec:sampling_new}. The resulting computational costs of  likelihood evaluations and simulation are discussed in Section~\ref{sec:usage}. 

Before stating the results we introduce some basic notation. 
When working with intrinsic GMRFs the definition in \eqref{eq:distX} is inconvenient since the covariance matrix has infinite eigenvalues. 
Instead one can use the canonical parametrization
$
\mv{X} \sim \mathcal{N}_{C}\left(\mv{\mu}_C, \mv{Q} \right), 
$ 
which implies that the density of $\mv{X}$ is given by
$
\pi(\mv{x}) \propto \exp \left(- \frac12\mv{x}^\top\mv{Q}\mv{x} + \mv{\mu}_C^\top\mv{x} \right)
$
and thus that $\mv{\mu}_C = \mv{Q}\mv{\mu}$.
Also, since we will be working with non-invertible matrices we will need the Moore--Penrose inverse and the pseudo determinant. 
We denote the Moore--Penrose inverse of a matrix $\mv{B}$ by $\mv{B}^{\dagger}$ and for a symmetric positive semi definite matrix $\mv{M}$ we define the  pseudo determinant as $|\mv{M}|^\dagger = \prod_{i:\lambda_i>0}\lambda_i$ where $\{\lambda_i\}$ are the eigenvalues of $\mv{M}$. Finally, for the remainder of this article, we will assume the following. 

\begin{Assumption}
	\label{ass:main}
	 $\mv{X} \sim \mathcal{N}_C\left(\mv{Q}\mv{\mu},\mv{Q}\right)$ where $\mv{Q}$ is a  positive semi-definite $n\times n$ matrix with rank $n-s>0$ and null-space $\mv{E}_Q$.  $\mv{A}$ is a $k\times n$ matrix with rank $k$ and $rank(\mv{AE}_Q)=k_0$.
\end{Assumption}


\subsection{Basis construction}\label{sec:basis}

Our main idea is to construct a basis on $\mathbb{R}^n$ such that the constraints are easily enforced. 
A key property of the basis is that $\mv{A}$ should be spanned by the first $k$ elements of the basis. 
Essentially, this means that we are transforming the natural basis into one where the results for the case of non-interacting hard constraints can be used. 
The basis is defined by an $n\times n$ change-of-basis matrix which we denote $\mv{T}$. 
In Algorithm~\ref{alg:1} we present a simple method to produce such a matrix, using the singular value decomposition (SVD). 
In the algorithm, $id(\mv{A})$ is a function that returns the indices of the non-zero columns in $\mv{A}$ and $\mv{A}_{\mv{\cdot},D}$ denotes the matrix obtained by extracting the columns in $\mv{A}$ with indices in the set $D$. 
The computational cost of the method is dominated by that of the SVD, which is 
$\mathcal{O}\left(k^3 + k^2|id(\mv{A})| \right)$  \cite[][p.~493]{golub2013matrix}.

\begin{wrapfigure}{r}{0.5\textwidth}
	\begin{minipage}{0.5\textwidth}
		\begin{algorithm}[H]
			\caption{Constraint basis construction.}
			\label{alg:1}
			\begin{algorithmic}[1]
				\Require $\mv{A}$ (a $k \times n$ matrix of rank $k$)
				\State $\mv{T}  \gets \mv{I}_n$ 
				\State $D \gets  id(\mv{A})$     
				\State $\mv{USV}^\top \gets svd(\mv{A}_{\mv{\cdot},D})$ 
				\State $\mv{T}_{D,D} \gets \mv{V} ^\top$
				\State  $\mv{T} \gets \begin{bmatrix}
					\mv{T}_{\mv{\cdot},D} &
					\mv{T}_{\mv{\cdot},D^c}
				\end{bmatrix}$
				\State Return $\mv{T}$
			\end{algorithmic}
		\end{algorithm}	
	\end{minipage}
\end{wrapfigure}
Clearly, the cubic scaling in the number of constraints may reduce the efficiency of any method that requires this basis construction as a first step. 
However, suppose that the rows of $\mv{A}$ can be split into two sub-matrices, $\widetilde{\mv{A}}_1$ and $\widetilde{\mv{A}}_2$, which have no common non-zero columns. 
Then the SVD of the two matrices can be computed separately. Suppose now that $\mv{A}$ corresponds to $m$ such sub-matrices, then, after reordering, $\mv{A} = \bigl[\widetilde{\mv{A}}_1^\top,\ldots, \widetilde{\mv{A}}_m^\top \bigr]^\top$ where 
$\{\tilde{\mv{A}}_i\}_{i=1}^m$ represent sub-constraints  such that $id(\tilde{\mv{A}}_i)\cap id(\tilde{\mv{A}}_l)= \emptyset$ for all $i$ and $l$. By replacing the SVD of Algorithm~\ref{alg:1} by the $m$ SVDs of the  lower-dimensional matrices the computational cost is reduced to
$
\mathcal{O}\left( \sum_{i=1}^m rank(\tilde{\mv{A}}_i)^3 + rank(\tilde{\mv{A}}_i)^2|id(\tilde{\mv{A}}_i)|\right)$.
This method is presented in Algorithm~\ref{alg:2}. The reordering step is easy to perform and is described  in Appendix~\ref{sec:details}, where also more details about the algorithm are given.
\begin{algorithm}[b]
	\caption{$CB(\mv{A})$.  Constraint basis construction for non-overlapping subsets of constraints. }
	\label{alg:2}
	\begin{multicols}{2}
		\begin{algorithmic}[1]
			\Require $\mv{A}$ (a $k\times n$ matrix of rank $k$)
			\State $\mv{T}  \gets \mv{I}_n$ 
			\State $D_{full} \gets  id(\mv{A})$ 
			\State $h\gets 1$ 
			\State $l\gets k+1$ 
			\State Reorder so that $\mv{A} = \begin{bmatrix}\widetilde{\mv{A}}_1^\top & \ldots & \widetilde{\mv{A}}_m^\top\end{bmatrix}^\top$
			\For{$i=1:m$ }
			\State $D \gets  id(\tilde{\mv{A}}_i)$ 
			\State $\mv{USV}^\top \gets svd(\widetilde{\mv{A}}_{i})$ 
			\State $u \gets ncol(\mv{U})$
			\State $\mv{T}_{h:(h+u),D} \gets \bigl(\mv{V}^\top\bigr)_{1:u,D} $
			\State $h \gets h+u+1$
			\If{$|D|>u$}
			\State $\mv{T}_{l:(l+|D|-u),D} \gets \bigl(\mv{V}^\top\bigr)_{(u+1):|D|,D} $
			\State $l \gets l+|D|-u+1$
			\EndIf
			\EndFor
			\State   $\mv{T} \gets \begin{bmatrix}
				\mv{T}_{\mv{\cdot},1:|D_{full}|} &
				\left(\mv{I}_n\right)_{\mv{\cdot},D_{full}^c}
			\end{bmatrix}$
			\State Return $\mv{T}$
		\end{algorithmic}
	\end{multicols}
\end{algorithm}

\subsection{Conditional distributions}\label{sec:sampling_new}
Using the change of basis from the previous subsection, we can now derive alternative formulations of the distributions of $\mv{AX}$ and $\mv{X}|\mv{AX}=\mv{b}$ which are suitable for sampling and likelihood-evaluation. There are two main results in this section. The first provides an expression of the density of $\mv{AX}$ that allows for computationally efficient likelihood evaluations for observations $\mv{AX} = \mv{b}$. The second formulates the conditional distribution for $\mv{X}|\mv{AX}=\mv{b}$ in a way that allows for efficient sampling of $\mv{X}$ given observations $\mv{AX}=\mv{b}$. 
To formulate the results, let $\mv{T}=CB(\mv{A})$ be the output of Algorithm~\ref{alg:1} or Algorithm~\ref{alg:2} and $\mv{X}^*=\mv{T}\mv{X}$ which, under Assumption \ref{ass:main}, has distribution 
$
\mv{X}^* \sim \mathcal{N}_C \left(\mv{Q}^*\mv{\mu}^*,\mv{Q}^*\right),
$
where
$\mv{\mu}^*=\mv{\mv{T}\mu}$ and $ \mv{Q}^* = \mv{T} \mv{Q}\mv{T}^\top$.
Henceforth we use stars to denote quantities such as means and precisions in the transformed space.  Note that we can move from the transformed space to the original space by multiplying with $\mv{T}^\top$ for a vector and by multiplying with $\mv{T}^\top$ from the left and $\mv{T}$ from the right for a matrix.

Since $\mv{A}$ is spanned by the first $k$ elements in $\mv{T}$, we use the index notation $\A=\{1,\ldots,k\}$ and $\Ac=\{k+1,\ldots,n\}$. We also use the matrix $\mv{H} = \bigl(\mv{A}\mv{T}^\top\bigr)_{\A\A}$, which is equal to $\mv{US}_{\A\A}$ from the SVD in Algorithm \ref{alg:1}, and therefore has inverse $\mv{H}^{-1}= \mv{S}^{-1}_{\A\A}\mv{U}^\top$. Finally we define $\mv{b}^*= \mv{H}^{-1}\mv{b}$.
%
\begin{Theorem}
	\label{Them:piAX}
	Under Assumption \ref{ass:main} it follows that
	\begin{align*}
		\pi_{\mv{AX}}(\mv{b}) =  \frac{ 
			| \mv{Q}_{\A|\Ac}^*|^{\frac{\dagger}{2}} }{(2\pi)^{k/2}|\mv{A}\mv{A}^\top|^{1/2}} \cdot
		\exp \left(- \frac{1}{2}\left( 	\mv{b}^* - \mv{\mu}^*_\A\right)^\top \mv{Q}_{\A|\Ac}^* \left(	\mv{b}^* - \mv{\mu}_{\A}^*\right) \right),
	\end{align*}
	where 
	$
\mv{Q}_{\A|\Ac}^* = \mv{Q}_{\A\A}^{*} -  \mv{Q}_{\A\Ac}^{*}\left(\mv{Q}_{\Ac\Ac}^{*}\right)^{\dagger}\mv{Q}_{\Ac\A}^{*}
	$ and $|\mv{Q}_{\A|\Ac}^*|^{\frac{\dagger}{2}}  = |\mv{Q}|^{\frac{1}{2}} |\mv{Q}^*_{\Ac\Ac}|^{-\frac{1}{2}}$.
\end{Theorem}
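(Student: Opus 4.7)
My plan is to reduce the computation to a marginal density in the transformed space and handle the intrinsic (rank-deficient) case via pseudo-inverses. First, I would exploit the fact that $\mv{T}$ produced by Algorithm~\ref{alg:1} is orthogonal (its non-trivial blocks are the $\mv{V}$ factors of SVDs and the remaining manipulations are permutations), so $\mv{X}^* = \mv{T}\mv{X}$ has distribution $\mathcal{N}_C(\mv{Q}^*\mv{\mu}^*, \mv{Q}^*)$ with $\mv{Q}^* = \mv{T}\mv{Q}\mv{T}^\top$ sharing the non-zero spectrum (and hence pseudo-determinant) of $\mv{Q}$. The central structural fact is that the row space of $\mv{A}$ is spanned by the first $k$ rows of $\mv{T}$; consequently $\mv{A}\mv{T}^\top$ has the block form $[\mv{H}\ \ \mv{0}]$ with $\mv{H}$ invertible, so $\mv{A}\mv{X} = \mv{H}\mv{X}^*_\A$.

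Second, I would apply a linear change of variables to write $\pi_{\mv{A}\mv{X}}(\mv{b}) = |\det \mv{H}|^{-1}\pi_{\mv{X}^*_\A}(\mv{b}^*)$ with $\mv{b}^* = \mv{H}^{-1}\mv{b}$. Using $\mv{H} = \mv{U}\mv{S}_{\A\A}$ from the SVD and $\mv{A}\mv{A}^\top = \mv{U}\mv{S}\mv{S}^\top \mv{U}^\top$ yields $|\det\mv{H}| = |\mv{A}\mv{A}^\top|^{1/2}$, which accounts exactly for the denominator in the claimed formula. This reduces the theorem to identifying the marginal distribution of $\mv{X}^*_\A$.

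Third, I would derive the marginal of $\mv{X}^*_\A$. In the proper case, partitioning the quadratic form in $\mv{x}^*$ by $(\A,\Ac)$ and completing the square in the $\Ac$-block shows that the marginal precision equals the Schur complement $\mv{Q}^*_{\A|\Ac}$ and the marginal mean equals $\mv{\mu}^*_\A$, matching the exponent in the theorem. The main obstacle here is the intrinsic case, where $\mv{Q}^*_{\Ac\Ac}$ may be singular. The right tool is the block decomposition $\mv{Q}^* = \mv{L}\mv{D}\mv{L}^\top$ with $\mv{D} = \mathrm{diag}(\mv{Q}^*_{\A|\Ac},\mv{Q}^*_{\Ac\Ac})$ and $\mv{L}$ a unit block-lower-triangular matrix involving $(\mv{Q}^*_{\Ac\Ac})^{\dagger}$; this decomposition remains valid with pseudo-inverses provided the null space of $\mv{Q}^*$ lies entirely in the $\Ac$-block, which is exactly what Assumption~\ref{ass:main} guarantees (the condition $\mathrm{rank}(\mv{A}\mv{E}_Q) = k_0$ together with the construction of $\mv{T}$ ensures that null directions of $\mv{Q}$ do not contribute to the constrained subspace in a way that would make $\mv{Q}^*_{\A|\Ac}$ singular on the orthogonal complement of the null space). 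Integrating out $\mv{x}^*_\Ac$ on its effective support then yields a proper Gaussian marginal on $\mv{x}^*_\A$ with precision $\mv{Q}^*_{\A|\Ac}$.

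Finally, for the normalizing constant I would use the pseudo-determinant identity
$|\mv{Q}^*|^{\dagger} = |\mv{Q}^*_{\Ac\Ac}|^{\dagger}\cdot|\mv{Q}^*_{\A|\Ac}|^{\dagger}$
coming from the same $\mv{L}\mv{D}\mv{L}^\top$ decomposition, combined with $|\mv{Q}^*|^{\dagger} = |\mv{Q}|^{\dagger}$ by orthogonality of $\mv{T}$, to deduce $|\mv{Q}^*_{\A|\Ac}|^{\dagger/2} = |\mv{Q}|^{1/2}|\mv{Q}^*_{\Ac\Ac}|^{-1/2}$. Assembling the Jacobian factor, the marginal normalizing constant and the quadratic form in $\mv{b}^* - \mv{\mu}^*_\A$ reproduces the stated expression. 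I expect the rank-deficient step to require the most care: verifying that the generalized Schur complement identities (both for the inverse and for the pseudo-determinant) apply under Assumption~\ref{ass:main} is the only non-mechanical piece, and I would likely defer its technical lemmas to an appendix.
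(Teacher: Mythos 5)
Your proposal follows essentially the same route as the paper's proof: transform to the $\mv{T}$-basis, absorb the Jacobian $|\det\mv{H}|^{-1}=|\mv{A}\mv{A}^\top|^{-1/2}$, and obtain the marginal of $\mv{X}^*_{\A}$ by completing the square in $\mv{x}^*_{\Ac}$ with the Moore--Penrose inverse (your block $\mv{L}\mv{D}\mv{L}^\top$ decomposition is the same computation packaged differently), together with the pseudo-determinant factorization $|\mv{Q}|^{\dagger}=|\mv{Q}^*_{\Ac\Ac}|^{\dagger}|\mv{Q}^*_{\A|\Ac}|^{\dagger}$. One justification you give is wrong, though it does not sink the argument: Assumption~\ref{ass:main} does \emph{not} guarantee that the null space of $\mv{Q}^*$ lies entirely in the $\Ac$-block --- it allows $k_0=\rank(\mv{A}\mv{E}_Q)>0$, in which case $\rank(\mv{T}_{\A}\mv{E}_Q)=k_0$ and the Schur complement $\mv{Q}^*_{\A|\Ac}$ is itself singular (which is precisely why the theorem states its normalizing constant as a pseudo-determinant). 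The generalized Schur identities you need are instead valid because $\mv{Q}^*$ is positive semi-definite, which forces the range inclusion $\mv{Q}^*_{\Ac\Ac}\bigl(\mv{Q}^*_{\Ac\Ac}\bigr)^{\dagger}\mv{Q}^*_{\Ac\A}=\mv{Q}^*_{\Ac\A}$ unconditionally; with that substitution your argument matches the paper's.
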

If $\mv{Q}$ is positive definite we have 
$\mv{Q}_{\A|\Ac}^* = \mv{Q}_{\A\A}^{*} -  \mv{Q}_{\A\Ac}^{*}\left(\mv{Q}_{\Ac\Ac}^{*}\right)^{-1}\mv{Q}_{\Ac\A}^{*}$ and we can then
replace $| \mv{Q}_{\A|\Ac}^*|^{\frac{\dagger}{2}}$ with $| \mv{Q}_{\A|\Ac}^*|^{\frac{1}{2}}$ in the expression of $\pi_{\mv{A}\mv{X}}$.
%

\begin{Theorem}
	\label{thm:piXAX}
	Under Assumption \ref{ass:main}  it follows that
	\begin{align}
		\label{eq:densXAXb}
		\mv{X}|\mv{A}\mv{X}=\mv{b} \sim \mathcal{N}_C\left(\mv{Q}_{X|b}\widetilde{\mv{\mu}},\mv{Q}_{X|b} \right)\mathbb{I}\left(\mv{AX}=\mv{b}\right),
	\end{align}
	where $	\mv{Q}_{X|b} = \mv{T}_{\Ac,}^\top \mv{Q}_{\Ac\Ac}^*\mv{T}_{\Ac,}$ is positive semi-definite with rank $n-s-(k-k_0)$ and $\widetilde{\mv{\mu}} =    \mv{T}^\top 	\widetilde{\mv{\mu}}^*$ with
$
		\widetilde{\mv{\mu}}^* = 
			\scalebox{0.75}{$\begin{bmatrix}
		\mv{b}^* \\
			\mv{\mu}_{\Ac}^* - \mv{Q}_{\Ac\Ac}^{*\dagger}  \mv{Q}^*_{\Ac\A} \left( \mv{b}^*  - \mv{\mu}^*_{\A} \right)
		\end{bmatrix}$}$.
\end{Theorem}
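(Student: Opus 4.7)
My plan is to reduce the problem to conditioning on a block of coordinates via the change of basis, apply the standard Gaussian conditioning formula with care in the intrinsic case, and then transform back.

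First, I would use that $\mv{T}$ is orthogonal (its nontrivial block is $\mv{V}^\top$ from an SVD, the rest is the identity). Since the first $k$ rows of $\mv{T}$ span the row space of $\mv{A}$, we have $\mv{A}\mv{T}^\top = [\mv{H},\mv{0}]$, so $\mv{A}\mv{X} = \mv{H}\mv{X}^*_\A$ with $\mv{X}^* = \mv{T}\mv{X}$. Hence the hard constraint $\mv{A}\mv{X}=\mv{b}$ is equivalent to $\mv{X}^*_\A = \mv{H}^{-1}\mv{b} = \mv{b}^*$, and the indicator pulls back as $\mathbb{I}(\mv{A}\mv{x}=\mv{b}) = \mathbb{I}(\mv{x}^*_\A = \mv{b}^*)$.

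Next, I would derive the conditional law of $\mv{X}^*_\Ac$ given $\mv{X}^*_\A=\mv{b}^*$ from the transformed density $\pi(\mv{x}^*) \propto \exp\bigl(-\tfrac12\mv{x}^{*\top}\mv{Q}^*\mv{x}^* + (\mv{Q}^*\mv{\mu}^*)^\top\mv{x}^*\bigr)$. Substituting $\mv{x}^*_\A=\mv{b}^*$ and collecting terms in $\mv{x}^*_\Ac$ yields precision $\mv{Q}^*_{\Ac\Ac}$ and canonical mean $\mv{c} = \mv{Q}^*_{\Ac\Ac}\mv{\mu}^*_\Ac - \mv{Q}^*_{\Ac\A}(\mv{b}^*-\mv{\mu}^*_\A)$. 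The subtle point is that $\mv{Q}^*_{\Ac\Ac}$ may be singular in the intrinsic case. Here I would invoke the general PSD block fact that any $\mv{v}\in\mathrm{null}(\mv{Q}^*_{\Ac\Ac})$ makes $[\mv{0};\mv{v}]\in\mathrm{null}(\mv{Q}^*)$, forcing $\mv{Q}^*_{\A\Ac}\mv{v}=0$; hence the columns of $\mv{Q}^*_{\Ac\A}$ lie in the range of $\mv{Q}^*_{\Ac\Ac}$ and $\mv{Q}^*_{\Ac\Ac}(\mv{Q}^*_{\Ac\Ac})^\dagger\mv{Q}^*_{\Ac\A}=\mv{Q}^*_{\Ac\A}$. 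Therefore $\mv{c} = \mv{Q}^*_{\Ac\Ac}\widetilde{\mv{\mu}}^*_\Ac$ with $\widetilde{\mv{\mu}}^*_\Ac$ as stated in the theorem.

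Combining the two pieces, the joint conditional of $\mv{X}^*$ is $\mathbb{I}(\mv{x}^*_\A=\mv{b}^*)$ times the Gaussian above. Pushing back through $\mv{X}=\mv{T}^\top\mv{X}^*$, the quadratic form becomes $\mv{x}^{*\top}_\Ac\mv{Q}^*_{\Ac\Ac}\mv{x}^*_\Ac = \mv{x}^\top\mv{T}_{\Ac,}^\top\mv{Q}^*_{\Ac\Ac}\mv{T}_{\Ac,}\mv{x} = \mv{x}^\top\mv{Q}_{X|b}\mv{x}$. Orthogonality of $\mv{T}$ gives $\mv{T}_{\Ac,}\mv{T}^\top = [\mv{0}_{(n-k)\times k},\mv{I}_{n-k}]$, so $\mv{T}_{\Ac,}\widetilde{\mv{\mu}} = \widetilde{\mv{\mu}}^*_\Ac$ and $\mv{Q}_{X|b}\widetilde{\mv{\mu}} = \mv{T}_{\Ac,}^\top\mv{Q}^*_{\Ac\Ac}\widetilde{\mv{\mu}}^*_\Ac$, which is precisely the canonical mean in \eqref{eq:densXAXb}.

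Finally, for the rank claim I would argue that $\mv{v}\in\mathrm{null}(\mv{Q}^*_{\Ac\Ac})$ iff $[\mv{0};\mv{v}]\in\mathrm{null}(\mv{Q}^*) = \mv{T}\mv{E}_Q$, equivalently $\mv{v} = \mv{T}_{\Ac,}\mv{e}$ for some $\mv{e}\in\mv{E}_Q$ with $\mv{T}_{\A,}\mv{e}=0$. Since the rows of $\mv{T}_{\A,}$ span the row space of $\mv{A}$, this is $\mv{A}\mv{e}=0$. By rank-nullity applied to $\mv{A}$ restricted to $\mv{E}_Q$, whose image has dimension $k_0$, the subspace $\mv{E}_Q\cap\mathrm{null}(\mv{A})$ has dimension $s-k_0$, and orthogonality of $\mv{T}$ preserves this dimension under $\mv{T}_{\Ac,}$. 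Hence $\mathrm{rank}(\mv{Q}^*_{\Ac\Ac}) = (n-k)-(s-k_0)$, and $\mathrm{rank}(\mv{Q}_{X|b}) = n - s - (k-k_0)$. The main obstacle throughout is the careful handling of the Moore--Penrose inverse in the singular case; once that is settled, the rest is standard Gaussian conditioning in the rotated basis.
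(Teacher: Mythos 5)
Your proposal is correct and follows essentially the same route as the paper's proof: condition on $\mv{X}^*_\A=\mv{b}^*$ in the transformed basis, complete the square using the identity $\mv{Q}^*_{\Ac\Ac}(\mv{Q}^*_{\Ac\Ac})^\dagger\mv{Q}^*_{\Ac\A}=\mv{Q}^*_{\Ac\A}$, map back via $\mv{X}=\mv{T}^\top\mv{X}^*$, and obtain the rank from the null space of $\mv{Q}^*_{\Ac\Ac}$ exactly as in the paper's Lemma~\ref{lem:rank}. The only difference is cosmetic: you explicitly justify the pseudo-inverse identity via the positive semi-definite block argument (the paper asserts it without proof) and you derive the rank directly from the null-space characterization rather than citing Horn and Johnson.
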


Note that $\mv{Q}_{X|b}\mv{A}=\mv{0}$, which implies that the right side of \eqref{eq:densXAXb} is a (possibly intrinsic) density with respect to Lebesgue measure on the level set $\{\mv{x}:\mv{Ax}=\mv{b}\}$. Further, note that  $\mv{Q}_{\Ac\Ac}^*\mv{T}_{\Ac}\mv{E}_0=\mv{0}$, which implies that $\mv{X}$ is improper on the span of $\mv{T}_{\Ac}\mv{E}_0$.

If $\mv{Q}$ is positive definite we get the following corollary.

\begin{Corollary}
	\label{col:simple}
	Under Assumption \ref{ass:main} with $s=0$, we have
	\begin{align}
		\label{eq:densXAXb2}
		\mv{X}|\mv{A}\mv{X}=\mv{b} &\sim \mathcal{N}\left(
		\widetilde{ \mv{\mu}}, \widetilde{\mv{\Sigma}}\right)\mathbb{I}\left(\mv{AX}=\mv{b}\right),
	\end{align}
	where $	\widetilde{\mv{\Sigma}} = \mv{T}_{\Ac,}^\top \left(\mv{Q}_{\Ac\Ac}^* \right)^{-1}\mv{T}_{\Ac,}$ is a positive semi-definite matrix of rank $n-k$ and $\widetilde{\mv{\mu}} =  \mv{T}^\top \widetilde{\mv{\mu}}^*$ with
$
\widetilde{\mv{\mu}}^* =  
			\scalebox{0.75}{$\begin{bmatrix}
			\mv{b}^* \\
			\mv{\mu}_{\Ac}^* - \left( \mv{Q}_{\Ac\Ac}^{*} \right)^{-1} \mv{Q}^*_{\Ac\A} \left( \mv{b}^*  - \mv{\mu}^*_{\A} \right)
		\end{bmatrix}.$}
$
\end{Corollary}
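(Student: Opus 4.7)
The plan is to derive Corollary 1 as a direct specialization of Theorem 2 to the case $s=0$. First, I would observe that under the corollary's hypothesis $\mv{Q}$ is positive definite, so its null space $\mv{E}_Q = \{\mv{0}\}$, which gives $k_0 = \rank(\mv{A}\mv{E}_Q) = 0$. The rank formula from Theorem 2 then simplifies to $n - s - (k - k_0) = n - k$, matching the claim. Since the orthogonal transformation $\mv{T}$ preserves positive definiteness, $\mv{Q}^* = \mv{T}\mv{Q}\mv{T}^\top$ and its principal submatrix $\mv{Q}^*_{\Ac\Ac}$ are both invertible, so the Moore--Penrose inverse $(\mv{Q}^*_{\Ac\Ac})^\dagger$ appearing in Theorem 2 may be replaced by the regular inverse, yielding the stated form of $\widetilde{\mv{\mu}}^*$.

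The remaining task is to convert the canonical representation $\mathcal{N}_C(\mv{Q}_{X|b}\widetilde{\mv{\mu}}, \mv{Q}_{X|b})\mathbb{I}(\mv{AX}=\mv{b})$ from Theorem 2 into the standard $(\mv{\mu}, \mv{\Sigma})$ form asserted in the corollary. I would first show that the stated $\widetilde{\mv{\Sigma}} = \mv{T}_{\Ac,}^\top (\mv{Q}^*_{\Ac\Ac})^{-1}\mv{T}_{\Ac,}$ is precisely the Moore--Penrose inverse of $\mv{Q}_{X|b} = \mv{T}_{\Ac,}^\top \mv{Q}^*_{\Ac\Ac}\mv{T}_{\Ac,}$. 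This follows from a short verification of the four Moore--Penrose axioms using only the orthogonality identity $\mv{T}_{\Ac,}\mv{T}_{\Ac,}^\top = \mv{I}_{n-k}$, which is a consequence of $\mv{T}$ being orthogonal; for instance, $\mv{Q}_{X|b}\widetilde{\mv{\Sigma}} = \mv{T}_{\Ac,}^\top \mv{T}_{\Ac,}$ is symmetric, and $\mv{Q}_{X|b}\widetilde{\mv{\Sigma}}\mv{Q}_{X|b} = \mv{Q}_{X|b}$, with the analogous identities for $\widetilde{\mv{\Sigma}}\mv{Q}_{X|b}$.

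Finally I would verify that $\widetilde{\mv{\mu}}$ itself lies on the level set $\{\mv{x}:\mv{Ax}=\mv{b}\}$, so that the canonical density $\exp(-\tfrac{1}{2}\mv{x}^\top\mv{Q}_{X|b}\mv{x} + \widetilde{\mv{\mu}}^\top\mv{Q}_{X|b}\mv{x})\mathbb{I}(\mv{Ax}=\mv{b})$ can be completed to the square $\exp(-\tfrac{1}{2}(\mv{x}-\widetilde{\mv{\mu}})^\top\mv{Q}_{X|b}(\mv{x}-\widetilde{\mv{\mu}}))\mathbb{I}(\mv{Ax}=\mv{b})$ up to a constant, giving the claimed $\mathcal{N}(\widetilde{\mv{\mu}}, \widetilde{\mv{\Sigma}})\mathbb{I}(\mv{Ax}=\mv{b})$ form. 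To check $\mv{A}\widetilde{\mv{\mu}} = \mv{b}$, I use that by construction of $\mv{T}$ the matrix $\mv{A}\mv{T}^\top$ has its last $n-k$ columns equal to zero and its first $k$ columns equal to $\mv{H}$, so $\mv{A}\widetilde{\mv{\mu}} = \mv{A}\mv{T}^\top\widetilde{\mv{\mu}}^* = \mv{H}\widetilde{\mv{\mu}}^*_{\A} = \mv{H}\mv{b}^* = \mv{b}$. There is no real obstacle in the argument; the entire proof is routine substitution plus one calculation exploiting the orthogonality of $\mv{T}$.
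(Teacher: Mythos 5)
Your proposal is correct and follows the same route the paper intends: the paper gives no separate proof of Corollary~\ref{col:simple}, treating it as the immediate specialization of Theorem~\ref{thm:piXAX} to $s=0$ (so $k_0=0$, the rank becomes $n-k$, and $(\mv{Q}^*_{\Ac\Ac})^{\dagger}$ becomes $(\mv{Q}^*_{\Ac\Ac})^{-1}$). Your additional verifications --- that $\widetilde{\mv{\Sigma}}=\mv{Q}_{X|b}^{\dagger}$ via the orthogonality $\mv{T}_{\Ac,}\mv{T}_{\Ac,}^\top=\mv{I}$, and that $\mv{A}\widetilde{\mv{\mu}}=\mv{H}\mv{b}^*=\mv{b}$ --- are accurate and simply make explicit the canonical-to-moment conversion the paper leaves implicit.
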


\subsection{Sampling and likelihood evaluations}\label{sec:usage}
The standard method for sampling a GMRF  $\mv{X} \sim \mathcal{N}\left(\mv{\mu},\mv{Q}^{-1}\right)$ is to first compute the Cholesky factor $\mv{R}$ of $\mv{Q}$, then sample $\mv{Z}\sim \mathcal{N}\left(\mv{0},\mv{I} \right)$, and finally set
\begin{align}
	\label{eq:sparsesample}	
	\mv{X} = \mv{\mu} +\mv{R}^{-1}\mv{Z}.
\end{align}
To sample $\mv{X}|\mv{AX}=\mv{b}$ we use this method in combination with Theorem~\ref{thm:piXAX} as shown in Algorithm~\ref{alg:sampling}.  The cost of using the algorithm for sampling, and for computing the expectation of $\mv{X}$ in Theorem~\ref{thm:piXAX}, is dominated by $\cchol{\mv{Q}^*_{\Ac\Ac}}$ given that $\mv{T}$ has been pre-computed.  
Similarly, the cost for evaluating the likelihood in Theorem \ref{Them:piAX} is dominated by the costs of the Cholesky factors $\cchol{\mv{Q}^*_{\Ac\Ac}}  +\cchol{\mv{Q}} + \cchol{\mv{A}\mv{A}^\top}$. 

These costs are not directly comparable to costs of the methods from Section~\ref{sec:old_methods} since they involve operations with the transformed precision matrix $\mv{Q}^*=\mv{T}\mv{Q}\mv{T}^\top$ which may have a different, and often denser, sparsity structure than $\mv{Q}$. In fact if $\mv{T}$ is dense the method will not be practically useful since even the construction of $\mv{Q}^*$ would be $\mathcal{O}\left(n^2\right)$. Thus, to understand the computational cost we must understand the sparsity structure of the transformed matrix. To that end, first note that only the rows $id(\mv{A})$ in $\mv{Q}^*$ will have a sparsity structure that is different from that in $\mv{Q}$. In general, the variables involved for the $i$th constraint,  $id(\mv{A}_i\mv{X})$, will in the constrained distribution share all their neighbors. This implies that if $i\in id(\mv{A})$, then $|\mv{Q}_{i,j}^*|>0$ if 
$|\mv{Q}_{i,j}|>0$ and we might have $|\mv{Q}_{i,j}^*|>0$ if $\sum_{k\in {id(\mv{A}})}|\mv{Q}_{k,j}^*|>0$. This provides a worst-case scenario for the amount of non-zero elements in $\mv{Q}^*$, where we see that the sparsity of the constraints is important. 

\section{GMRFs under hard and soft constraints}\label{sec:extension}
As previously mentioned, one can view observations of a GMRF as hard constraints. In many cases, these observations are assumed to be taken under Gaussian measurement noise, which can be seen as soft constraints on the GMRF. It is therefore common to have models with both soft and hard constraints (e.g., a model with noisy observations of a field with a sum-to-zero constraint). Here, we extend the methods of the previous section to this case. 
Specifically, we consider the following  hierarchical model
\begin{equation}\label{eq:hierarchical}
	\begin{split}
\mv{X}& \sim \mathcal{N}_C\left(\mv{Q}\mv{\mu},\mv{Q}  \right), \quad \mbox{ subject to }
\mv{AX} = \mv{b}, \\
\mv{Y} &\sim \mathcal{N}\left(\mv{B}\mv{X}, \sigma^2_Y\mv{I} \right),
\end{split}
\end{equation}
where  $\mv{Y}\in\mathbb{R}^m$ represent noisy observations of the linear combinations $\mv{BX}$ of $\mv{X} \in\mathbb{R}^n$, with $m\leq n$, satisfying Assumption~\ref{ass:main}, and $\mv{B}$ is an $m\times n$ matrix with rank $m$.
To deal with this type of models we present two results in this section. 
First, Theorem~\ref{Them:piAXsoft} shows how to compute the likelihood of the model. Second, the result in Theorem~\ref{Them:piXgby} can be used to efficiently compute the mean of $\mv{X}$ given the constraints and to sample from it.

We use the hat notation -- like $\widehat{\mv{Q}}$ -- to denote quantities for distributions conditionally on the observations $\mv{Y}=\mv{y}$. We also use the notation from Theorem \ref{thm:piXAX} and additionally introduce $\mv{B}^*=\mv{BT}^\top$ and $\mv{y}^* = \mv{y}-\mv{B}\mv{T}_\A^\top \mv{b}^*$.
We start by deriving the likelihood, $\pi_{\mv{Y}|\mv{AX}} (\mv{y}|\mv{b})$, which is needed for inference.

\begin{Theorem}
	\label{Them:piAXsoft}
	For the model in \eqref{eq:hierarchical} one has
	\begin{align*}
	\pi_{\mv{Y}|\mv{AX}}(\mv{y}|\mv{b}) = &
	\frac{\sigma_Y^{-m}|\mv{Q}^*_{\Ac\Ac}|^{\frac{\dagger}{2}} }{\left(2\pi \right)^{c_0} |\widehat{\mv{Q}}^*_{\Ac\Ac} |^{\frac{\dagger}{2}}}  
	 \exp \left(-  \frac{1}{2}\left[\frac{ \mv{y}^{*T}\mv{y}^*}{\sigma^2_Y}+ \widetilde{\mv{\mu}}_{\Ac} ^{*\top} \mv{Q}^*_{\Ac\Ac}\widetilde{\mv{\mu}}^*_{\Ac}- \widehat{\mv{\mu}}_{\Ac}^{*\top}\widehat{\mv{Q}} ^*_{\Ac\Ac}  \widehat{\mv{\mu}}^*_{\Ac}\right]\right),
	\end{align*}
	where $c_0>0$, and 
\begin{align*}
\widehat{\mv{Q}}^*_{\Ac\Ac} &= \mv{Q}^*_{\Ac\Ac}  +\frac{1}{\sigma^2_Y} \left(\mv{B}^*_{\Ac}\right)^\top  \mv{B}^*_{\Ac},  \\
\quad \widehat{\mv{\mu}}^*_{\Ac} &=\widehat{\mv{Q}}^{*\dagger}_{\Ac\Ac}  \left(
\mv{Q}^*_{\Ac\Ac} \widetilde{\mv{\mu}}^*_{\Ac} + \frac{1}{\sigma^2_Y}\left(\mv{B}^*_{\Ac}\right)^\top \mv{y}^* \right).
\end{align*}
\end{Theorem}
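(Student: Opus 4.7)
The plan is to change into the transformed basis of Section~\ref{sec:basis} and then perform a standard Gaussian Bayesian update on the reduced problem. Setting $\mv{X}^{*}=\mv{TX}$, the hard constraint pins down $\mv{X}^{*}_{\A}=\mv{b}^{*}$, and Theorem~\ref{thm:piXAX} shows that $\mv{X}^{*}_{\Ac}\mid\mv{AX}=\mv{b}$ is (possibly intrinsic) canonically Gaussian with density proportional to $\exp\bigl(-\tfrac{1}{2}(\mv{x}^{*}_{\Ac}-\widetilde{\mv{\mu}}^{*}_{\Ac})^{\top}\mv{Q}^{*}_{\Ac\Ac}(\mv{x}^{*}_{\Ac}-\widetilde{\mv{\mu}}^{*}_{\Ac})\bigr)$ on the range of $\mv{Q}^{*}_{\Ac\Ac}$. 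Decomposing $\mv{BX}=\mv{B}^{*}\mv{X}^{*}=\mv{B}\mv{T}_{\A}^{\top}\mv{b}^{*}+\mv{B}^{*}_{\Ac}\mv{X}^{*}_{\Ac}$ then gives $\mv{y}^{*}\mid\mv{X}^{*}_{\Ac}\sim\mathcal{N}(\mv{B}^{*}_{\Ac}\mv{X}^{*}_{\Ac},\sigma^{2}_{Y}\mv{I})$.

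Next I would invoke the identity
\[
\pi_{\mv{Y}\mid\mv{AX}}(\mv{y}\mid\mv{b}) = \frac{\pi_{\mv{Y}\mid\mv{X}^{*}_{\Ac},\mv{AX}}(\mv{y}\mid\mv{x}^{*}_{\Ac},\mv{b})\;\pi_{\mv{X}^{*}_{\Ac}\mid\mv{AX}}(\mv{x}^{*}_{\Ac}\mid\mv{b})}{\pi_{\mv{X}^{*}_{\Ac}\mid\mv{Y},\mv{AX}}(\mv{x}^{*}_{\Ac}\mid\mv{y},\mv{b})},
\]
valid at any $\mv{x}^{*}_{\Ac}$ in the joint support. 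Multiplying the Gaussian likelihood by the intrinsic Gaussian prior and completing the square in $\mv{x}^{*}_{\Ac}$ yields a canonical posterior with precision $\widehat{\mv{Q}}^{*}_{\Ac\Ac}=\mv{Q}^{*}_{\Ac\Ac}+\sigma_{Y}^{-2}(\mv{B}^{*}_{\Ac})^{\top}\mv{B}^{*}_{\Ac}$ and canonical-mean parameter $\mv{Q}^{*}_{\Ac\Ac}\widetilde{\mv{\mu}}^{*}_{\Ac}+\sigma_{Y}^{-2}(\mv{B}^{*}_{\Ac})^{\top}\mv{y}^{*}$, so its ordinary mean is exactly the stated $\widehat{\mv{\mu}}^{*}_{\Ac}$. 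Evaluating the identity at $\mv{x}^{*}_{\Ac}=\widehat{\mv{\mu}}^{*}_{\Ac}$ kills the exponential in the denominator and leaves only the posterior's normalizing constant $(2\pi)^{\rank(\widehat{\mv{Q}}^{*}_{\Ac\Ac})/2}|\widehat{\mv{Q}}^{*}_{\Ac\Ac}|^{-\dagger/2}$ there.

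The surviving exponent is the sum of $-\|\mv{y}^{*}-\mv{B}^{*}_{\Ac}\widehat{\mv{\mu}}^{*}_{\Ac}\|^{2}/(2\sigma_{Y}^{2})$ from the likelihood and $-\tfrac{1}{2}(\widehat{\mv{\mu}}^{*}_{\Ac}-\widetilde{\mv{\mu}}^{*}_{\Ac})^{\top}\mv{Q}^{*}_{\Ac\Ac}(\widehat{\mv{\mu}}^{*}_{\Ac}-\widetilde{\mv{\mu}}^{*}_{\Ac})$ from the prior. Expanding both quadratics and collecting the linear terms in $\widehat{\mv{\mu}}^{*}_{\Ac}$ via the canonical-mean identity $\widehat{\mv{Q}}^{*}_{\Ac\Ac}\widehat{\mv{\mu}}^{*}_{\Ac}=\mv{Q}^{*}_{\Ac\Ac}\widetilde{\mv{\mu}}^{*}_{\Ac}+\sigma_{Y}^{-2}(\mv{B}^{*}_{\Ac})^{\top}\mv{y}^{*}$ causes them to telescope, producing exactly $-\tfrac{1}{2}\bigl[\sigma_{Y}^{-2}\mv{y}^{*\top}\mv{y}^{*}+\widetilde{\mv{\mu}}_{\Ac}^{*\top}\mv{Q}^{*}_{\Ac\Ac}\widetilde{\mv{\mu}}^{*}_{\Ac}-\widehat{\mv{\mu}}_{\Ac}^{*\top}\widehat{\mv{Q}}^{*}_{\Ac\Ac}\widehat{\mv{\mu}}^{*}_{\Ac}\bigr]$, matching the theorem. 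The three normalizing prefactors $(2\pi\sigma_{Y}^{2})^{-m/2}$, $(2\pi)^{-\rank(\mv{Q}^{*}_{\Ac\Ac})/2}|\mv{Q}^{*}_{\Ac\Ac}|^{\dagger/2}$, and $(2\pi)^{+\rank(\widehat{\mv{Q}}^{*}_{\Ac\Ac})/2}|\widehat{\mv{Q}}^{*}_{\Ac\Ac}|^{-\dagger/2}$ combine into the stated constant with $c_{0}=\bigl(m+\rank(\mv{Q}^{*}_{\Ac\Ac})-\rank(\widehat{\mv{Q}}^{*}_{\Ac\Ac})\bigr)/2$.

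The main obstacle is the intrinsic case. One must verify that completing the square is legitimate with Moore--Penrose inverses when $\mv{Q}^{*}_{\Ac\Ac}$ and $(\mv{B}^{*}_{\Ac})^{\top}\mv{B}^{*}_{\Ac}$ have different null spaces, that $\widehat{\mv{\mu}}^{*}_{\Ac}$ lies in the range of $\widehat{\mv{Q}}^{*}_{\Ac\Ac}$ so the canonical form is well defined, and that the ranks interact consistently enough to give $c_{0}>0$ and a genuine probability density on $\mathbb{R}^{m}$ at the end. Careful bookkeeping of the pseudo-determinant factors --- in particular, identifying the Lebesgue reference measures on the respective subspaces carrying the prior and the posterior, and checking that their ratio delivers a proper Lebesgue density on $\mathbb{R}^{m}$ for $\mv{Y}$ rather than a degenerate one --- is where the proper-GMRF computation needs to be upgraded.
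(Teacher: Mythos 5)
Your argument is correct and essentially coincides with the paper's proof: the paper writes $\pi_{\mv{Y}|\mv{AX}}(\mv{y}|\mv{b})$ as $\int \pi_{\mv{Y}|\mv{X}^*_{\Ac},\mv{X}^*_{\A}}\,\pi_{\mv{X}^*_{\Ac}|\mv{X}^*_{\A}}\,d\mv{x}^*_{\Ac}$, rewrites the integrand via the same completion of the square as $\pi_{\mv{X}^*_{\Ac}|\mv{Y},\mv{X}^*_{\A}}$ times factors free of $\mv{x}^*_{\Ac}$, and integrates the posterior to one --- which is exactly the marginalization form of your ratio identity. The intrinsic-case rank bookkeeping you flag is likewise left implicit in the paper's proof, and your explicit expression for $c_0$ is a small bonus rather than a deviation.
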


The computational cost of evaluating the likelihood is $\cchol{\widehat{\mv{Q}}^*_{\Ac\Ac}} + \csolve{\widehat{\mv{Q}}^*_{\Ac\Ac}} +\cchol{\mv{Q}^*_{\Ac\Ac}}$.
The following theorem contains the distribution of  $\mv{X}$ given the event $\{\mv{AX}=\mv{b},\mv{Y}=\mv{y}\}$, which for example is needed when the model is used for prediction. 


\begin{Theorem}
	\label{Them:piXgby}
	For model in \eqref{eq:hierarchical} one has
	$\pi_{\mv{X}| \mv{AX},\mv{Y}}(\mv{b},\mv{y}) \sim  \mathcal{N}\left(
\widehat{\mv{\mu}},
	\widehat{\mv{Q}}
	\right) 
$
	where $\widehat{\mv{Q}} = \mv{T}^\top_{\Ac,}	\widehat{\mv{Q}}^*_{\Ac\Ac} \mv{T}_{\Ac,}$ and 
	$
	\widehat{\mv{\mu}} = \mv{T}^\top
	\scalebox{0.75}{$\begin{bmatrix}
	\mv{b}^* \\
	\widehat{\mv{\mu}}^*_{\Ac} 
	\end{bmatrix}$}
	$. Here 
	$\widehat{\mv{Q}}^*_{\Ac\Ac}$ and 
	$\widehat{\mv{\mu}}^*_{\Ac}$ are given in Theorem~\ref{Them:piAXsoft}.
	Further, let $\mv{E}_{\mv{Q}^*_{\Ac\Ac}}$ be the null space of $\mv{Q}^*_{\Ac\Ac}$ then $\rank(\widehat{\mv{Q}}) = n-s-(k-k_0)+rank(\mv{B}^*_{\Ac} \mv{E}_{\mv{Q}^*_{\Ac\Ac}})$.
\end{Theorem}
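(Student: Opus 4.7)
The plan is to combine Theorem~\ref{thm:piXAX} with a standard Gaussian conjugate update, carried out in the transformed coordinates $\mv{X}^* = \mv{T}\mv{X}$. By Bayes' rule,
\[
\pi_{\mv{X}|\mv{AX},\mv{Y}}(\mv{x}|\mv{b},\mv{y}) \;\propto\; \pi_{\mv{Y}|\mv{X}}(\mv{y}|\mv{x})\,\pi_{\mv{X}|\mv{AX}}(\mv{x}|\mv{b}),
\]
and since the Gaussian likelihood depends on $\mv{x}$ only through $\mv{B}\mv{x} = \mv{B}^*\mv{x}^*$, the change of basis is well-suited to the update.

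First I would invoke Theorem~\ref{thm:piXAX} to conclude that, on the event $\mv{A}\mv{X}=\mv{b}$, the $\A$-block is pinned to $\mv{X}^*_{\A}=\mv{b}^*$, while $\mv{X}^*_{\Ac}$ has the (possibly singular) canonical density with precision $\mv{Q}^*_{\Ac\Ac}$ and mean $\widetilde{\mv{\mu}}^*_{\Ac}$. Substituting $\mv{B}\mv{X} = \mv{B}^*_{\A}\mv{b}^* + \mv{B}^*_{\Ac}\mv{X}^*_{\Ac}$ into the observation model turns the likelihood into
\[
\mv{y}^* \;=\; \mv{B}^*_{\Ac}\mv{X}^*_{\Ac} + \boldsymbol{\varepsilon}, \qquad \boldsymbol{\varepsilon}\sim\mathcal{N}(\mv{0},\sigma_Y^2\mv{I}),
\]
with $\mv{y}^* = \mv{y} - \mv{B}\mv{T}_\A^\top\mv{b}^*$. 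Multiplying prior and likelihood in $\mv{X}^*_{\Ac}$ and completing the square in the exponent gives the standard conjugate-update formulae
\[
\widehat{\mv{Q}}^*_{\Ac\Ac} = \mv{Q}^*_{\Ac\Ac} + \sigma_Y^{-2}(\mv{B}^*_{\Ac})^\top\mv{B}^*_{\Ac}, \quad \widehat{\mv{\mu}}^*_{\Ac} = \widehat{\mv{Q}}^{*\dagger}_{\Ac\Ac}\!\left(\mv{Q}^*_{\Ac\Ac}\widetilde{\mv{\mu}}^*_{\Ac} + \sigma_Y^{-2}(\mv{B}^*_{\Ac})^\top\mv{y}^*\right),
\]
which coincide with those in Theorem~\ref{Them:piAXsoft}. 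Pulling the result back to the original basis via $\mv{X}=\mv{T}^\top\mv{X}^*$ and noting that the $\A$-block is deterministic gives $\widehat{\mv{\mu}} = \mv{T}^\top[\mv{b}^{*\top},\widehat{\mv{\mu}}^{*\top}_{\Ac}]^\top$ and $\widehat{\mv{Q}} = \mv{T}^\top_{\Ac,}\widehat{\mv{Q}}^*_{\Ac\Ac}\mv{T}_{\Ac,}$.

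The main obstacle, and the only non-routine step, is the rank formula. Since $\mv{T}$ is invertible, $\mv{T}_{\Ac,}$ has full row rank $n-k$, so multiplication by $\mv{T}_{\Ac,}$ on the right and $\mv{T}^\top_{\Ac,}$ on the left preserves rank; hence $\rank(\widehat{\mv{Q}}) = \rank(\widehat{\mv{Q}}^*_{\Ac\Ac})$. Because $\widehat{\mv{Q}}^*_{\Ac\Ac}$ is a sum of two positive semi-definite matrices, its null space is the intersection of the individual null spaces, and $\ker((\mv{B}^*_{\Ac})^\top\mv{B}^*_{\Ac}) = \ker(\mv{B}^*_{\Ac})$, so
\[
\rank(\widehat{\mv{Q}}^*_{\Ac\Ac}) \;=\; (n-k) - \dim\!\bigl(\mv{E}_{\mv{Q}^*_{\Ac\Ac}} \cap \ker(\mv{B}^*_{\Ac})\bigr).
\]
Applying rank-nullity to the restriction of $\mv{B}^*_{\Ac}$ to $\mv{E}_{\mv{Q}^*_{\Ac\Ac}}$, and using $\dim\mv{E}_{\mv{Q}^*_{\Ac\Ac}} = (n-k) - \rank(\mv{Q}^*_{\Ac\Ac}) = s - k_0$ (which follows from the rank of $\mv{Q}_{X|b}$ in Theorem~\ref{thm:piXAX} together with $\mv{T}_{\Ac,}$ having full row rank), I get
\[
\dim\!\bigl(\mv{E}_{\mv{Q}^*_{\Ac\Ac}} \cap \ker(\mv{B}^*_{\Ac})\bigr) \;=\; (s-k_0) - \rank\!\bigl(\mv{B}^*_{\Ac}\mv{E}_{\mv{Q}^*_{\Ac\Ac}}\bigr),
\]
which, substituted above, yields the claimed $n - s - (k - k_0) + \rank(\mv{B}^*_{\Ac}\mv{E}_{\mv{Q}^*_{\Ac\Ac}})$.
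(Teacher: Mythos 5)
Your proof is correct and follows essentially the same route as the paper's: a conjugate Gaussian update of the conditional law from Theorem~\ref{thm:piXAX} carried out in the transformed coordinates $\mv{X}^*=\mv{T}\mv{X}$, completing the square to obtain $\widehat{\mv{Q}}^*_{\Ac\Ac}$ and $\widehat{\mv{\mu}}^*_{\Ac}$, and pulling back via $\mv{X}=\mv{T}^\top\mv{X}^*$. The one place you go beyond the paper is the rank formula, which the paper's proof never actually derives: your argument --- that the null space of a sum of positive semi-definite matrices is the intersection of the individual null spaces, combined with rank--nullity for $\mv{B}^*_{\Ac}$ restricted to $\mv{E}_{\mv{Q}^*_{\Ac\Ac}}$ and the identity $\dim \mv{E}_{\mv{Q}^*_{\Ac\Ac}}=s-k_0$ from Lemma~\ref{lem:rank} --- is correct and fills that gap.
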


Since the distribution in the theorem is a normal distribution, we can sample from $\mv{X}$ given the event $\{\mv{AX}=\mv{b},\mv{Y}=\mv{y}\}$ using sparse Cholesky factorization as shown in Algorithm~\ref{alg:sampling2}.

\begin{figure}
	\begin{center}
		\begin{minipage}[t]{0.45\linewidth}
			\begin{algorithm}[H]
				\caption{Sampling $\mv{X} \sim \mathcal{N} \left(\mv{\mu},\mv{Q}^{-1}\right)$ subject to $\mv{AX}=\mv{b}$. \protect\phantom{$\mv{Y} \sim \mathcal{N}\left(\mv{B}\mv{X}, \sigma^2_Y\mv{I} \right)$} }
				\label{alg:sampling}
				\begin{algorithmic}[1]
					\Require $\mv{A}, \mv{b}, \mv{Q}, \mv{\mu}, \mv{T}$
					\State $\A \gets 1:nrow(\mv{A})$
					\State $\Ac \gets \left(nrow(\mv{A})+1\right):ncol(\mv{A})$
					\State $\mv{Q}^* \gets \mv{T}\mv{Q}\mv{T}^\top$
					\State $\mv{R} \gets  chol(\mv{Q}_{\Ac\Ac}^*)$    
					\State $\mv{b}^* \gets solve\left( \bigl(\mv{A}\mv{T}^\top\bigr)_{\A\A},\mv{b}\right)$
					\State $\mv{m}^* \gets 	\mv{\mu}_{\Ac}  -$\newline \hspace*{0.9cm}$solve(\mv{R}^\top, 
					\mv{Q}^*_{\Ac\A} \left(\mv{b}^*-\mv{T}_{\A} \mv{\mu}\right)) $
					\State Sample $\mv{Z} \sim \mathcal{N}\left(\mv{0},\mv{I}_{\Ac\Ac}\right)$
					\State $\mv{X}^* \gets \begin{bmatrix}
						\mv{b}^* , 
					 solve(\mv{R}, \mv{m}^* + \mv{Z})
					\end{bmatrix}^\top$
					\State $\mv{X} \gets \mv{T}^\top \mv{X}^*$
					\State Return $\mv{X}$
				\end{algorithmic}
			\end{algorithm}	
		\end{minipage}
		\begin{minipage}[t]{0.54\linewidth}	
			\begin{algorithm}[H]
			\caption{Sampling $\mv{X} \sim \mathcal{N} \left(\mv{\mu},\mv{Q}^{-1}\right)$ subject to $\mv{AX}=\mv{b}$ and $\mv{Y}=\mv{y}$ where $\mv{Y} \sim \mathcal{N}\left(\mv{B}\mv{X}, \sigma^2_Y\mv{I} \right)$.}
			\label{alg:sampling2}
			\begin{algorithmic}[1]
				\Require $\mv{A}, \mv{b}, \mv{Q}, \mv{\mu}, \mv{T},\mv{y},\mv{B}, \sigma^2_Y$
				\State $\A \gets 1:nrow(\mv{A})$
				\State $\Ac \gets \left(nrow(\mv{A})+1\right):ncol(\mv{A})$
				\State $\mv{Q}^* \gets \mv{T}\mv{Q}\mv{T}^\top$
				\State $\mv{B}^* \gets\mv{B}\mv{T}^\top$
				\State $\mv{R} \gets  chol(\mv{Q}_{\Ac\Ac}^* + \frac{1}{\sigma^2_Y}\left(\mv{B}^*\right)^\top \mv{B}^*)$    
				\State $\mv{b}^* \gets solve\left( \bigl(\mv{A}\mv{T}^\top\bigr)_{\A\A},\mv{b}\right)$
				\State $\mv{y}^* \gets \mv{y}-\mv{BT}^\top_{\A}\mv{b}^*$
				\State $\mv{m}^* \gets solve\left(\mv{R}^\top, \mv{Q}_{\Ac\Ac}^*\mv{T}_{\Ac}\mv{\mu}+ \right.$\newline 
				\hspace*{0.9cm}$\left.\frac{1}{\sigma^2_Y} \left(\mv{B}^* \right)^\top \mv{y}^* -\mv{Q}^*_{\Ac\A} \left(\mv{b}^*-\mv{T}_{\A} \mv{\mu}\right)\right) $
				\State Sample $\mv{Z} \sim \mathcal{N}\left(\mv{0},\mv{I}_{\Ac\Ac}\right)$
				\State $\mv{X}^* \gets \begin{bmatrix}
					\mv{b}^* , 
					solve(\mv{R}, \mv{m}^* + \mv{Z})
				\end{bmatrix}^\top$
				\State $\mv{X} \gets \mv{T}^\top \mv{X}^*$
				\State Return $\mv{X}$
			\end{algorithmic}
		\end{algorithm}	
		\end{minipage}
	\end{center}
\end{figure}

\section{Constrained Gaussian processes and the SPDE approach}\label{sec:constrainedGP}
Gaussian processes and random fields are typically specified in terms of their mean and covariance functions. However, a problem with any covariance-based Gaussian model is the computational cost for inference and simulation. Several authors have proposed solutions to this problem, and one particularly important solution is the  GMRF approximation by \cite{lindgren11}. This method is applicable to Gaussian process and random fields with Mat\'ern covariance functions, 
$$
r(h) = \frac{\sigma^2}{\Gamma(\nu)2^{\nu-1}}(\kappa h)^\nu K_\nu(\kappa h), \qquad h\geq 0,
$$
which is the most popular covariance model is spatial statistics, inverse problems and machine learning \cite{Guttorp2006,Rasmussen2006}. The method relies on the fact that a Gaussian random field $X(s)$ on $\mathbb{R}^d$ with a Mat\'ern covariance function can be represented as a solution to the SPDE 
\begin{equation}\label{eq:spde}
	(\kappa^2 - \Delta)^{\frac{\alpha}{2}} X = \phi \mathcal{W},
\end{equation}
where the exponent $\alpha$ is related to $\nu$ via the relation $\alpha = \nu + d/2$, $\Delta$ is the Laplacian, $\mathcal{W}$ is Gaussian white noise on $\mathbb{R}^d$, and $\phi$ is a constant that controls the variance of $X$. The GMRF approximation by \cite{lindgren11} is based on restricting  \eqref{eq:spde} to a bounded domain $\mathcal{D}$, imposing homogeneous Neumann boundary conditions on the operator, and approximating the solution via a finite element method (FEM). The resulting approximation is 
$X_h(s) = \sum_{i=1}^n X_i \varphi_i(s)$,
where $\{\varphi_i(s)\}$ are piecewise linear basis functions induced by a triangulation of the domain, and the vector $\mv{X}$ with all weights $X_i$ is a centered multivariate Gaussian distribution. This can be done for any $\alpha>d/2$ \cite{BK2020rational}, but the case $\alpha\in\mathbb{N}$ is of particular importance since $\mv{X}$ then is a GMRF. In particular, when $\alpha=2$ and $\phi=1$, the precision matrix of $\mv{X}$ is 
$
\mv{Q} = (\kappa^2\mv{C} + \mv{G})\mv{C}^{-1}(\kappa^2\mv{C} + \mv{G}),
$
where $\mv{C}$ is a diagonal matrix with diagonal elements $C_{ii} = \int \varphi_i(s) ds$, and $\mv{G}$ is a sparse matrix with elements $G_{ij} = \int \varphi_i(s)\varphi_j(s) ds$. 

Clearly, a linear constraint on $X_h(s)$ can be written as a constraint on $\mv{X}$. 
For example, if $X_h(s)$ is observed at a location in a given triangle, it creates a linear constraint on the three variables in $\mv{X}$ corresponding to the corners of the triangle. 
Thus, if we draw some observation locations $s_1,\ldots,s_k$ in the domain, we can write 
$\mv{Y} = (X_h(s_1),\ldots, X_h(s_k))^\top = \mv{A}_Y\mv{X}$ where $\mv{A}_Y$ is a $k\times n$ matrix with $\left(\mv{A}_Y\right)_{ij} = \varphi_j(s_i)$. Thus, a model where a Gaussian Mat\'ern fields is observed without measurement noise can efficiently be handled by combining the SPDE approach with the methods from Section~\ref{sec:new_method}. The next section contains a simulation study that compares this combined approach with a standard covariance-based approach in terms of computational cost. 

Through the nested SPDE approach in \cite{bolin11} one can also construct computationally efficient representations of differentiated Gaussian Mat\'ern fields like 
$U(\mv{s}) = (\mv{v}^\top\nabla) X(\mv{s})$, where $\mv{v}^\top\nabla$ is the directional derivative in the direction given by the vector $\mv{v}$ and $X(\mv{s})$ is a sufficiently differentiable Mat\'ern field. A FEM approximation of this model can be written as $U_h(s) = \sum_{i=1}^n U_i \varphi_i(s)$ where now $\mv{U}\sim \pN(\mv{0},\mv{A}_U\mv{Q}^{-1}\mv{A}_U^\top )$. Here $\mv{A}_U$ is a sparse matrix representing the directional derivative and $\mv{Q}$ is the precision matrix of the GMRF representation of $X(\mv{s})$  \cite{bolin11}. If we introduce $\mv{X} \sim \pN(\mv{0},\mv{Q}^{-1})$, we may write $\mv{U} = \mv{A}_U\mv{X}$, and we can thus enforce a restriction on the directional derivative of $\mv{X}$ as a linear restriction $\mv{A}_U\mv{X} = \mv{b}$. As an example, $\mv{v} = (1,1)^\top$ and $\mv{b}=\mv{0}$  results in the restriction $\frac{\pd}{\pd s_1}X(\mv{s}) + \frac{\pd}{\pd s_2}X(\mv{s}) =0$, or in other words that the field is divergence-free. In the next section we use this in combination with the methods in Section~\ref{sec:extension} to construct a computationally efficient Gaussian process regression under linear constraints.


\section{Numerical illustrations} \label{sec:illustrations}
In this section we present two applications. In both cases, timings are obtained through R \cite{Rref} implementations (see the supplementary materials) run on an iMac Pro computer with a 3.2 GHz Intel Xeon processor. The supplementary material contains the source code, including recipes for constructing the figures. 

\subsection{Observations as hard constraints}
Suppose that we have an SPDE approximation $X_h(s)$ of a Gaussian Mat\'ern field $X(s)$, as described above with $\mathcal{D} = [0,1]^2$ and a  triangulation for the GMRF approximation that is based on a uniform mesh with $100 \times 100$ nodes in $\mathcal{D}$. We consider the costs of sampling  $X_h(s)$ conditionally on $k$ point observations without measurement noise, and of log-likelihood evaluations for  these observations. In both cases, the observations are simulated using the parameters $\kappa^2 = 0.5, \phi = 1$ and $\alpha=2$.

We show in the left panel of Figure~\ref{fig:spde} the computation time for evaluating the log-likelihood using the standard method from Section~\ref{sec:old_methods} on the GMRF of weights $\mv{X}$ for the basis expansion of $X_h(\mv{s})$. The panel also shows the corresponding computation time for the new method from Section~\ref{sec:new_method}. The computation times are evaluated for different values of $k$, where for each $k$ the observation locations are sampled uniformly over triangles, and uniformly within triangles, under the restriction that there can be only one observation per triangle, which guarantees that  $\mv{A}_Y\mv{Q}^{-1}\mv{A}_Y^\top$ has full rank. In each iteration, the values of $\kappa^2$ and $\phi$ that are evaluated in the likelihood are sampled from a uniform distribution on $[1,2]$. The curves shown in the figure are computed as averages of $10$ repetitions for each value of $k$.
As a benchmark, we also show the time it takes to evaluate  the log-likelihood assuming that $X(s)$ is a Gaussian Mat\'ern field, which means that we evaluate the log-likelihood of a $k$-dimensional $\pN(\mv{0},\mv{\Sigma})$ distribution without using any sparsity properties. 
\begin{figure}[t]
	\centering
	\includegraphics[width=0.49\linewidth]{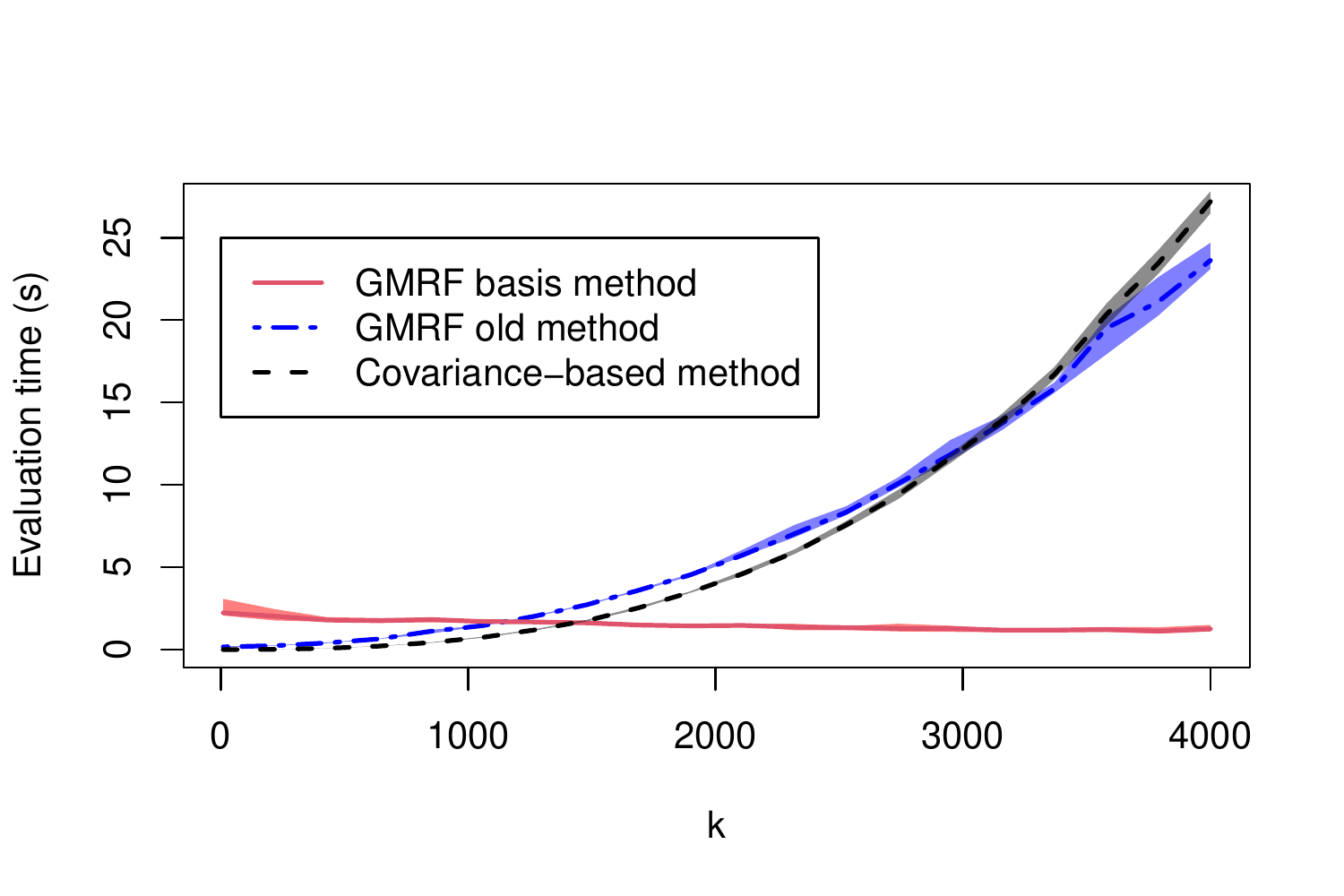}
	\includegraphics[width=0.49\linewidth]{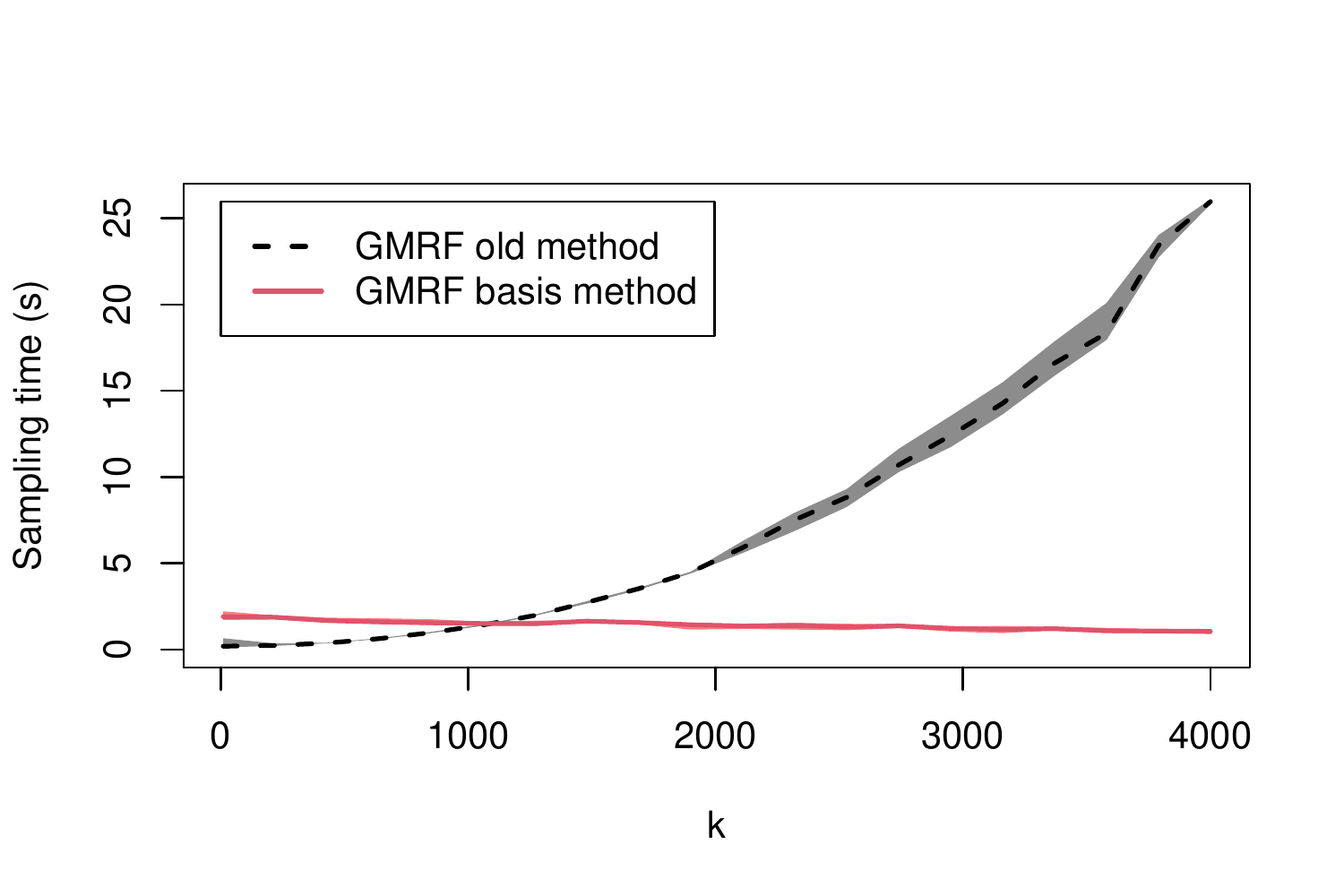}
	\vspace{-0.5cm}
	\caption{Average computation times for one likelihood evaluation (left) and one sample from $\mv{X} | \mv{A}_Y\mv{X} = \mv{y}$ (right) of the Mat\'ern model as a function of the number of observations $k$. As an indication of the uncertainty, the envelopes show the smallest and largest value for each $k$.}
	\label{fig:spde}
\end{figure}

Note that the covariance-based method is the fastest up to approximately 1000 observations, since the problem then is too small for sparsity to be beneficial. For more than 1000 observations, the new method wins and it has in fact a computation time that is decreasing in the number of observations. It should be noted that the difference between the new and old method would be even larger if we reported the computation time for more than one likelihood evaluation, since the construction of the basis needed for the new method only has to be done once. 

In the right panel of Figure~\ref{fig:spde} we show the time needed to sample $X_h(\mv{s})$ conditionally on the observations $\mv{A}_Y \mv{X}=\mv{y}$, i.e., to simulate from $\mv{X}|\mv{A}_Y \mv{X}=\mv{y}$. Both the old method (conditioning by kriging) and the new method (using \eqref{eq:sparsesample} with mean and precision from Theorem~\ref{thm:piXAX}) are shown. We do not show the covariance-based method since it is much slower than both GMRF methods. Also here the displayed values are averages of $10$ repetitions for each value of $k$, and for each repetition the simulation is performed using values  of $\kappa^2$ and $\phi$ that sampled from a uniform distribution on $[1,2]$. The results are similar to those for likelihood evaluations, where the old method is fastest for a low number of observations whereas the new method is much faster for large numbers of observations.

\subsection{Gaussian process regression with  linear constraints}\label{sec:divergence}
We now consider the application from \cite{NIPS2017_6721}, where we assume that we are given noisy observations 
$\mv{Y}_i = \mv{f}(\mv{s}_i) + \mv{\vep}_i$, with $\mv{\vep}_i \sim \pN(0,\sigma^2_e \mv{I})$ of a bivariate  function $\mv{f} = (f_1,\,f_2) : \mathbb{R}^2 \rightarrow \mathbb{R}^2$ with $f_1(\mv{s}) = e^{-as_1s_2}(as_1\sin(s_1s_2) - s_1\cos(s_1s_2))$ and 
$f_2(\mv{s}) = e^{-as_1s_2}(s_2\sin(s_1s_2) - as_2\sin(s_1s_2))$. 
The goal is to use Gaussian process regression to reconstruct $\mv{f}$, under the assumption that we know that it is divergence-free, i.e., $\frac{\pd}{\pd s_1}\mv{f} + \frac{\pd}{\pd s_2}\mv{f} = 0$. We thus want to improve the regression estimate by incorporating this information in the Gaussian process prior for $\mv{f}$. This can be done as in \cite{Scheuerer2012,Wahlstrom2013,NIPS2017_6721} by encoding the information directly in the covariance function, or by imposing the restriction through a hard constraint at each spatial location through the nested SPDE approach. This is done by setting $\mv{B}=  \mv{A}_Y$ and $\mv{A}= \mv{A}_U$ in \eqref{eq:hierarchical}  where the matrices are defined in Section \ref{sec:constrainedGP}.

We choose $a=0.01$ and $\sigma_2 = 10^{-4}$ and generate $50$ observations at randomly selected locations in $[0,4]\times [0,4]$ and predict the function $\mv{f}$ at $N^2 = 20^2$ regularly spaced locations in the square. Independent Mat\'ern priors with $\alpha=4$ are assumed for $f_1$ and $f_2$ and the  covariance-based approach by  \cite{NIPS2017_6721} is taken as a baseline method. As an alternative, we consider the SPDE approximation of the Mat\'ern priors, with $n$ basis functions obtained from a regular triangulation of an extended domain $[-2,6]\times [-2,6]$ (the  extension is added to reduce boundary effects). To be able to use Algorithm~\ref{alg:2}, we only enforce the divergence constraint at every third node for the SPDE model.  This procedure can be seen as an  approximation of the divergence operator that will converge to the true operator when the number of basis functions increases.

The parameters of the baseline model and of the SPDE model are estimated using maximum likelihood, where the likelihood for the SPDE model is computed using Theorem~\ref{Them:piAXsoft}. The function $\mv{f}$ is then reconstructed using the posterior mean of the Gaussian process given the data, which is calculated using Theorem~\ref{Them:piXgby}. This experiment is repeated for $50$ randomly generated datasets.  
In the left panel of Figure~\ref{fig:divergence} we show the average root mean squared error (RMSE) for the reconstruction of $\mv{f}$ for the SPDE model as a function of $n$, based on these $50$ simulations, together with the corresponding RMSE of the baseline method. The shaded region for the SPDE model is a pointwise $95\%$ confidence band.  
One can see that the SPDE model gives a comparable RMSE as long as $n$ is large enough. 

\begin{figure}[t]	
	\centering
	\includegraphics[width=0.49\linewidth]{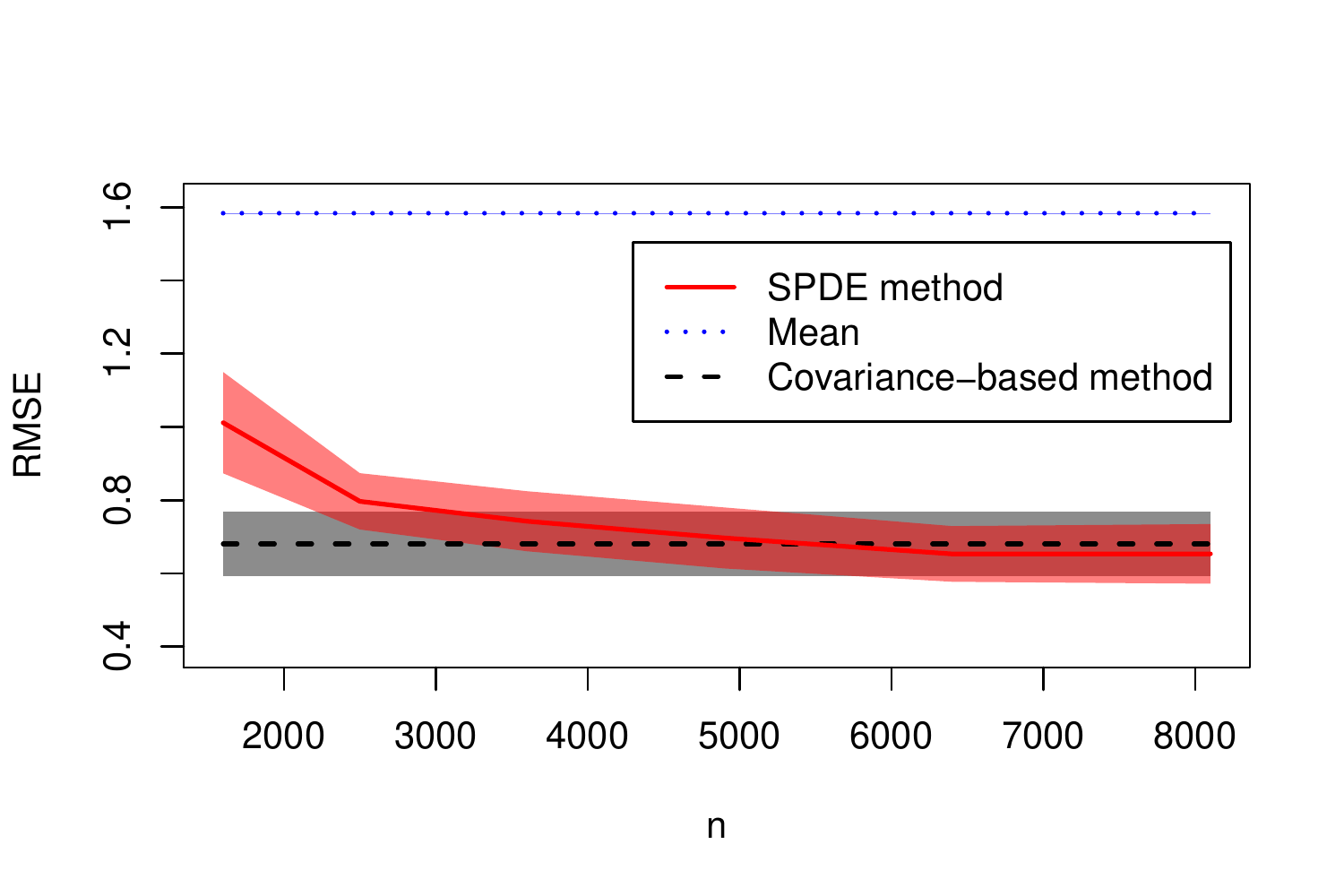}
	\raisebox{0cm}{\includegraphics[width=0.49\linewidth]{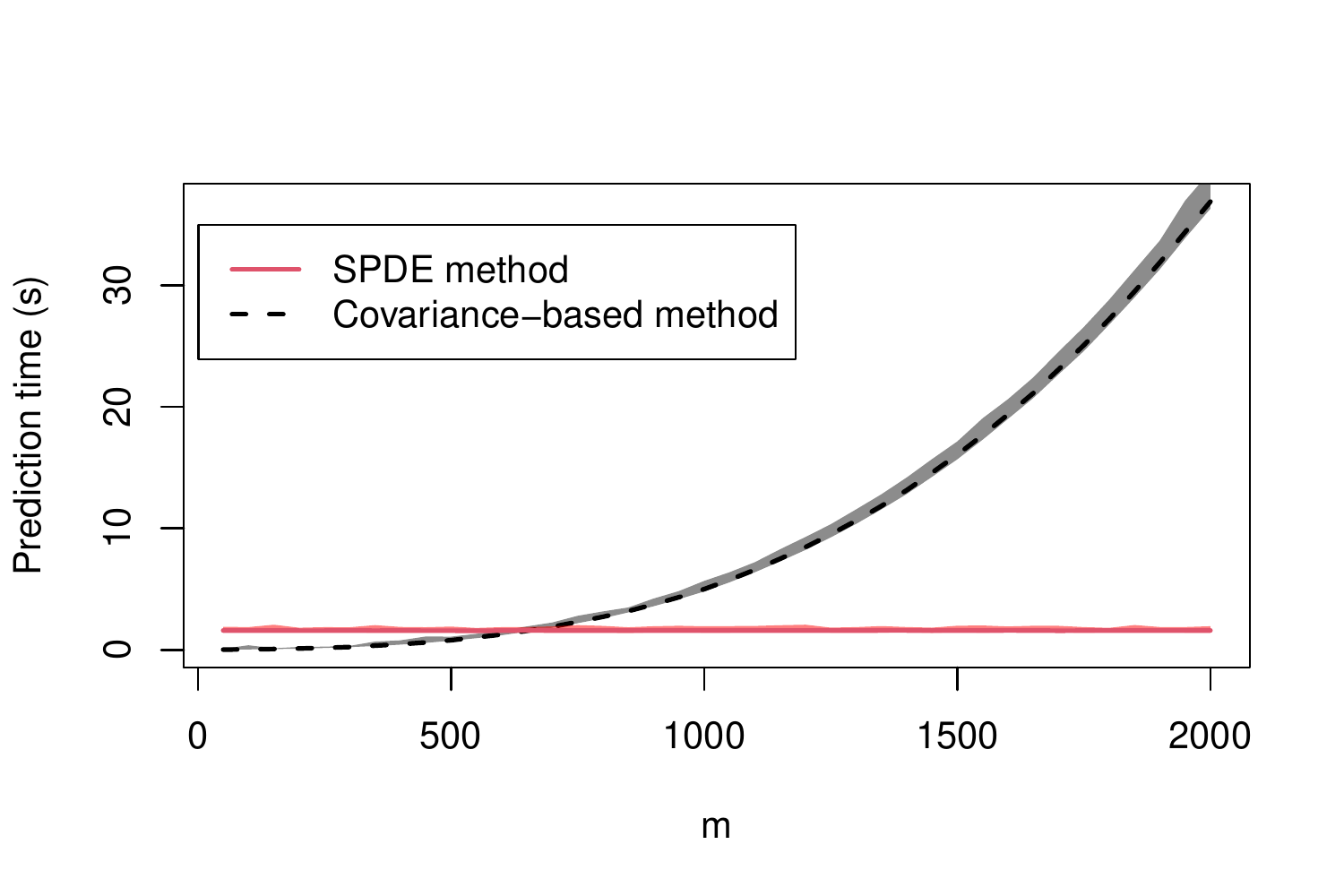}}
	\vspace{-0.5cm}
	\caption{Left: Average RMSE with corresponding $95\%$ pointwise confidence band for $50$ reconstructions of $\mv{f}$ based on SPDE method with different number of basis functions $n$ (red), the corresponding RMSE for the Gaussian process model (black) and the RMSE for estimating $\mv{f}$ by a constant equal to the mean of the observations (blue). Right: Computation time for the prediction of $\mv{f}$ as a function of the number of observations ($m$) with envelopes as in Figure~\ref{fig:spde}.}
	\label{fig:divergence}
\end{figure}

We next fix $n = 3600$ and consider the time it takes to compute a prediction of $\mv{f}$ given the estimated parameters. In the right panel of Figure~\ref{fig:divergence} we show this computation time as a function of the number of observations, $m$, for the baseline method and for the SPDE-based method. Also here we see that the covariance-based method is the fastest for small numbers of observations, whereas the GMRF method (that has a computational cost that scales with the number of basis functions of the SPDE approximation rather than with the number of observations) is fastest whenever $m>600$.

\section{Discussion}\label{sec:conclusions}
We proposed new methods for GMRFs under hard and soft linear constraints, which can greatly reduce computational costs for models with a large number of constraints.
In addition, we showed how to combine these methods with the SPDE approach to allow for computationally efficient linearly constrained Gaussian process regression.	 

Clearly the proposed methods will not be beneficial if the number of constraints is small. Another limitation is that the methods are only efficient if the constraints are sparse. An interesting topic for future research is to handle problems where both sparse and dense constraints are included. In that case one could combine the proposed method with a conditioning by kriging approach where the dense constraints are handled in a post-processing step as described in Section~\ref{sec:old_methods}.

Some recent papers on constrained Gaussian processes, such as \cite{Cong2017hyper}, consider methods that are similar in spirit to those we have developed here. However, to the best of our knowledge, the methods proposed here are the first to account for sparsity, which is the crucial property for GMRFs. We have only considered exact methods in this work, but if one is willing to relax this requirement, an interesting alternative is the Krylov subspace methods by \cite{Simpson2006}. Comparing, or combining, the proposed methods with iterative Krylov subspace methods is an interesting topic for future research. 

In Section \ref{sec:divergence} we showed the advantages of using our proposed method together with SPDE approach for Gaussian process regression with linear constraints. The SPDE approach also allows for more flexible non-stationary covariance structures like the generalized Whittle--Mat\'ern models \cite{BK2020rational}. Our proposed methods are directly applicable to these models in the Markov case (with integer smoothness), and an interesting research direction would be to extend our methods to the case with general smoothness by combining the constraint basis with the rational approximations in \cite{BK2020rational}.

Finally, we can think of no potential negative societal
impacts that this work may have, given that it is solely concerned with
improving the performance of existing methods.

\appendix

\section{Details of the constraint basis construction}\label{sec:details}
In this section we provide more details about the algorithms for constructing the constraint basis. 

A first natural question is why the singular value decomposition (SVD) is a natural method for building a basis with non-interacting hard constraints. To answer that, note that if $\mv{USV}^\top = SVD(\mv{A})$, then by construction the basis $\mv{V}^\top$ is orthonormal and the first $k$ rows span the image of $\mv{A}$ and the last $n-k$ rows span the null-space of $\mv{A}$.  Now, if we let $\mv{x}^*$ denote a vector $\mv{x}$ expressed in the basis $\mv{V}^\top$, then $\mv{x}^*$ can be transformed back to the natural basis by $\mv{x}=\mv{V}\mv{x}^*$ hence 
$$
\mv{A}\mv{x}= \mv{b} \quad \Leftrightarrow\quad  \mv{US} \mv{x}^*=\mv{b} \quad\Leftrightarrow\quad \begin{bmatrix}
	\mv{U}\mv{S}_{\A,\A} & \mv{0} 
\end{bmatrix} \mv{x}^*=\mv{b} 
\quad\Leftrightarrow\quad 
\mv{U}\mv{S}_{\A,\A}  \mv{x}^*_{\A} = \mv{b}.
$$ 

A second question that should be addressed is how the reordering of the $\mv{A}$ matrix in Algorithm~\ref{alg:2} is done. This is illustrated in Algorithm~\ref{alg:tildeVec} where we show how to build the sub-matrices $\{\tilde{\mv{A}}\}_{k=1}^m$.

\newcommand{\Break}{\State \textbf{break} }
\begin{figure}[H]
	\begin{center}
		\begin{minipage}[t]{0.45\linewidth}
			\begin{algorithm}[H]
				\caption{Find all non-overlapping sub-matrices}
				\label{alg:tildeVec}
				\begin{algorithmic}[1]
					\Require $\mv{A}$ (a $k \times n$ matrix)
					\State $\{\tilde{\mv{A}}_1,\mv{B} \} \gets overlap(\mv{A})$
					\State $m\gets 1$
					\While{ $\mv{B} \neq \emptyset$}
					\State $m\gets m+1$
					\State $\{\tilde{\mv{A}}_m,\mv{B} \} \gets overlap(\mv{B})$
					\EndWhile
					\State Return $\{\tilde{\mv{A}}\}_{k=1}^m$
				\end{algorithmic}
			\end{algorithm}	
		\end{minipage}
		\hspace{0.2cm}
		\begin{minipage}[t]{0.49\linewidth}
			\begin{algorithm}[H]
				\caption{$overlap(\mv{A})$Find first sub-matrix}
				\label{alg:tilde}
				\begin{algorithmic}[1]
					\Require $\mv{A}$ (a $k \times n$ matrix)
					\State $U \gets \{1\}$
					\State $d \gets 0$
					\State $D \gets  id(\mv{A}_{U,\mv{\cdot}})$    
					\While{ $1$}
					\State $D \gets  id(\mv{A}_{U,\mv{\cdot}})$     
					\State $U \gets id\left(\left(\mv{A}_{\mv{\cdot},D} \right)^\top \right)$ 
					\If{$d=|U|$}
					\Break
					\EndIf
					\State $d \gets  |U|$
					\EndWhile
					\State $\widetilde{\mv{A}} = \mv{A}_{\mv{\cdot},U}$
					\State $\widetilde{\mv{A}}^c = \mv{A}_{\mv{\cdot},U^c}$
					\State Return $\{\widetilde{\mv{A}}, \widetilde{\mv{A}}^c\}$
				\end{algorithmic}
			\end{algorithm}	
		\end{minipage}
	\end{center}
\end{figure}
\section{Proofs}\label{app:proof}
In this section we prove the four main theorems of the paper. 
\begin{proof}[Proof of Theorem \ref{Them:piAX}]
	We first transform the density of $\mv{AX}$ to the basis represented by $\mv{T}$,
	\begin{align*}
		\pi_{\mv{AX}}(\mv{b}) = \pi_{\mv{AT}^\top\mv{X}^*}(\mv{b})=\pi_{\mv{X}_\A^*}\left(\mv{H}^{-1}\mv{b }\right
		) \left| |\mv{H}|^{-1}  \right|
		= \pi_{\mv{X}_\A^*}\left(\mv{H}^{-1}\mv{b }\right
		) |\mv{A}\mv{A}^\top|^{-1/2}.
	\end{align*} 
	In order to derive the density $\mv{X}_\A^*$, note that the density of $\mv{X}^*$ is 
	$$
	\pi_{\mv{X}_\A^*}(\mv{x}^*) = \frac{|\mv{Q}|^{\dagger/2}}{(2\pi)^{\frac{n-s}{2}}}\exp\left(-\frac1{2}Q(\mv{x}^*)\right),
	$$ 
	where the quadratic form $Q(\mv{x}^*)$ is
	\begin{align*}
		Q(\mv{x}^*)  =&  \begin{bmatrix}
			\mv{x}^*_{\Ac} - \mv{\mu}_{\Ac}^*\\
			\mv{x}^*_{\A} - \mv{\mu}_{\A}^*
		\end{bmatrix}^\top
		\begin{bmatrix}
			\mv{Q}_{\Ac\Ac}^{*} & \mv{Q}_{\A\Ac}^{*} \\
			\mv{Q}_{\A\Ac}^{*} & \mv{Q}_{\A\A}^{*}
		\end{bmatrix} 
		\begin{bmatrix}
			\mv{x}^*_{\Ac} - \mv{\mu}_{\Ac}^*\\
			\mv{x}^*_{\A} - \mv{\mu}_{\A}^*
		\end{bmatrix} \\
		=&  	
		\left(\mv{x}^{*}_{\A}- \mv{\mu}_{\A}^*\right)^\top 	
		\mv{Q}_{\A\A}^{*}\left(\mv{x}^*_{\A}  - \mv{\mu}_{\A}^*\right)  +		
		\left(\mv{x}^*_{\A} -\mv{\mu}_{\A}^*\right)^\top
		\mv{Q}_{\A\Ac}^{*}\left(\mv{x}^*_{\Ac}-\mv{\mu}_{\Ac}^*\right) \\
		&+ 
		\left(\mv{x}^*_{\Ac}-\mv{\mu}_{\Ac}^*\right)^\top \mv{Q}_{\Ac\A}^{*} \left(\mv{x}^*_{\A}-\mv{\mu}_{\A}^*\right) 
		+ \left( \mv{x}^*_{\Ac} - \mv{\mu}_{\Ac}^*\right)^\top  \mv{Q}_{\Ac\Ac}^{*}
		\left( \mv{x}^*_{\Ac} - \mv{\mu}_{\Ac}^* \right) \\
		=&  	(\mv{x}^{*}_{\A}- \mv{\mu}_{\A}^*)^\top 	\mv{Q}_{\A\A}^{*}	(\mv{x}^*_{\A}  - \mv{\mu}_{\A}^*) -	\left(\mv{x}^*_{\A}-\mv{\mu}_{\A}^*\right)^\top \mv{Q}_{\A\Ac}^{*}\mv{Q}_{\Ac\Ac}^{*\dagger}\mv{Q}_{\Ac\A}^{*}\left(\mv{x}^*_{\A} - \mv{\mu}_{\A}^*\right) 	+ \\  
		&\hspace{-1cm}+\left( \mv{x}^*_{\Ac} - \mv{\mu}_{\Ac}^* + \mv{Q}_{\Ac\Ac}^{*\dagger} \mv{Q}_{\Ac\A}^{*} (\mv{x}^*_{\A} - \mv{\mu}_{\A}^*) \right)^\top  \mv{Q}_{\Ac\Ac}^{*}
		\left( \mv{x}^*_{\Ac} - \mv{\mu}_{\Ac}^*
		+ \mv{Q}_{\Ac\Ac}^{*\dagger} \mv{Q}_{\Ac\A}^{*} (\mv{x}^*_{\A} - \mv{\mu}_{\A}^*) \right).
	\end{align*}
	Here we in the  last step wrote the expression so that we easily can integrate out $\mv{X}^*_{\Ac}$ on the complement to the null space of $\mv{Q}^*_{\Ac\Ac}$. Doing so yields the desired result,
	\begin{align*}
		\pi(\mv{x}_{\A}^*) =& \int 	\pi_{\mv{X}_\A^*}(\mv{x}^*) d\mv{x}^*_{\Ac}  
		\propto  \frac{|\mv{Q}|^{\dagger/2}}{|\mv{Q}^*_{\Ac\Ac}|^{\dagger/2}} 
		\exp \left(- \frac{1}{2}		\hat{Q}(\mv{x}_{\A}^*) \right),
	\end{align*}
	where 
	$$
	\hat{Q}(\mv{x}_{\A}^*) =  \left(	\mv{x}^*_{\A} - \mv{\mu}^*_{\A} \right)^\top 	\left(\mv{Q}_{\A\A}^{*} - \mv{Q}_{\A\Ac}^{*}\mv{Q}_{\Ac\Ac}^{*\dagger}\mv{Q}_{\Ac\A}^{*}\right)\left(\mv{x}^*_{\A}-\mv{\mu}^*_{\A} \right). 
	$$
\end{proof}

To prove Theorem~\ref{thm:piXAX} we need the following lemma.
\begin{Lemma}
	\label{lem:rank}	
	Under Assumption \ref{ass:main} one has $rank\left(\mv{Q}_{\A\A}^* \right)=k-k_0$ and $rank \left(\mv{Q}_{\Ac\Ac}^* \right)=n-s-(k-k_0)$.
\end{Lemma}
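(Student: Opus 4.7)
The starting point is that $\mv{T}$ produced by Algorithm~\ref{alg:1} is orthogonal: the SVD factor $\mv{V}^\top$ occupies the rows and columns of $\mv{T}$ indexed by $D = id(\mv{A})$ while the identity fills the remaining block, making the whole matrix orthogonal. Hence $\mv{Q}^* = \mv{T}\mv{Q}\mv{T}^\top$ has the same rank $n-s$ as $\mv{Q}$, and its null space equals $\mv{F} := \mv{T}\mv{E}_Q$, a subspace of dimension $s$. My main tool will be the standard fact that for a positive semi-definite matrix $\mv{M}$ and any index set $I$, a vector $\mv{v}$ lies in $\ker(\mv{M}_{II})$ if and only if its zero-padded embedding in $\mathbb{R}^n$ lies in $\ker(\mv{M})$ (because p.s.d.-ness gives $\mv{v}^\top\mv{M}\mv{v}=0 \Leftrightarrow \mv{M}\mv{v}=0$). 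Applying this to $\mv{Q}^*$ yields
\begin{align*}
\rank(\mv{Q}^*_{II}) = |I| - \dim(W_I \cap \mv{F}),
\end{align*}
where $W_I \subset \mathbb{R}^n$ denotes the coordinate subspace supported on $I$, so the task reduces to computing these two intersection dimensions.

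For the $\Ac\Ac$ block I would write a generic element of $W_\Ac \cap \mv{F}$ as $\mv{T}\mv{e}$ with $\mv{e}\in\mv{E}_Q$ and $(\mv{T}\mv{e})_\A = \mv{0}$. The identity $\mv{A}\mv{T}^\top = [\mv{H},\,\mv{0}]$ combined with orthogonality of $\mv{T}$ gives $\mv{A}\mv{e} = \mv{A}\mv{T}^\top(\mv{T}\mv{e}) = \mv{H}(\mv{T}\mv{e})_\A$, so the condition $(\mv{T}\mv{e})_\A = \mv{0}$ is equivalent to $\mv{A}\mv{e}=\mv{0}$ because $\mv{H}$ is invertible. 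Hence $\dim(W_\Ac \cap \mv{F}) = \dim(\mv{E}_Q \cap \ker\mv{A})$, and rank-nullity applied to $\mv{A}|_{\mv{E}_Q} : \mv{E}_Q \to \mathbb{R}^k$, whose image has dimension $k_0$ by assumption, yields $\dim(\mv{E}_Q \cap \ker\mv{A}) = s - k_0$. Substituting gives $\rank(\mv{Q}^*_{\Ac\Ac}) = (n-k) - (s-k_0) = n-s-(k-k_0)$.

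For the $\A\A$ block the analogous reduction produces $\dim(W_\A \cap \mv{F}) = \dim(\mv{E}_Q \cap \text{row}(\mv{A}))$: writing $\mv{T}\mv{e} \in W_\A$ forces $\mv{T}_\Ac \mv{e}=\mv{0}$, and orthogonality of $\mv{T}$ makes the null space of $\mv{T}_\Ac$ equal to the row span of $\mv{T}_\A$, which coincides with $\text{row}(\mv{A})$ since $\mv{A}=\mv{H}\mv{T}_\A$ with $\mv{H}$ invertible. The main technical hurdle I foresee is then establishing that $\dim(\mv{E}_Q \cap \text{row}(\mv{A})) = k_0$. I plan to recast this via the bijection $\mv{A}^\top:\mathbb{R}^k \to \text{row}(\mv{A})$, which reduces the claim to $\rank(\mv{Q}\mv{A}^\top) = k-k_0$, equivalently $\dim \mv{A}(\mv{E}_Q^\perp) = k-k_0$. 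This should follow by comparing the decomposition $\mathbb{R}^k = \mv{A}(\mv{E}_Q) + \mv{A}(\mv{E}_Q^\perp)$ with the assumption defining $k_0$, exploiting the duality between the $\A$ and $\Ac$ blocks through the common null space $\mv{F}$ of total dimension $s$, which links the $\A$-side intersection directly to the $\Ac$-side one already computed. Once this identification is in place, $\rank(\mv{Q}^*_{\A\A}) = k-k_0$ is immediate and the proof is complete.
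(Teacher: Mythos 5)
Your treatment of the $\Ac\Ac$ block is correct, complete, and essentially the paper's own argument: both rest on the fact that for a positive semi-definite matrix the kernel of a principal diagonal block equals the intersection of the full kernel with the corresponding coordinate subspace (the paper invokes Theorem 4.3.28 of Horn and Johnson for exactly this), and both then identify that intersection with $\mv{E}_Q\cap\ker\mv{A}$, whose dimension is $s-k_0$ by rank--nullity applied to $\mv{A}$ restricted to $\mv{E}_Q$. This gives $\rank(\mv{Q}^*_{\Ac\Ac})=(n-k)-(s-k_0)$, which is the half of the lemma actually used later (in Theorems~\ref{thm:piXAX} and~\ref{Them:piXgby}).

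The $\A\A$ half, however, is only a plan, and the plan does not close. Your reduction correctly yields $\rank(\mv{Q}^*_{\A\A})=k-\dim\bigl(\mv{E}_Q\cap\operatorname{row}(\mv{A})\bigr)$, so you need $\dim\bigl(\mv{E}_Q\cap\operatorname{row}(\mv{A})\bigr)=k_0$. But the decomposition $\mathbb{R}^k=\mv{A}(\mv{E}_Q)+\mv{A}(\mv{E}_Q^\perp)$ you propose to exploit is not a direct sum in general, so it only gives $\dim\mv{A}(\mv{E}_Q^\perp)\geq k-k_0$, i.e.\ $\dim\bigl(\mv{E}_Q\cap\operatorname{row}(\mv{A})\bigr)\leq k_0$, and the inequality can be strict. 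Concretely, take $n=2$, $\mv{Q}=\diag(0,1)$ (so $s=1$ and $\mv{E}_Q$ is spanned by $(1,0)^\top$) and $\mv{A}=(1,1)$, so $k=k_0=1$; then $\operatorname{row}(\mv{A})\cap\mv{E}_Q=\{\mv{0}\}$, and indeed $\mv{T}_\A=\tfrac{1}{\sqrt{2}}(1,1)$ gives $\mv{Q}^*_{\A\A}=\tfrac12$ of rank $1\neq k-k_0=0$. So the step you flag as the "main technical hurdle" cannot be filled as stated. Note that this is not a defect of your route in particular: the paper's proof obtains the $\A\A$ rank by asserting that the kernel of $\mv{Q}^*$ is exhausted by vectors supported entirely on the $\Ac$ coordinates, which is the same unjustified splitting; your reduction makes explicit that the $\A\A$ claim requires the additional hypothesis $\mv{A}(\mv{E}_Q)\cap\mv{A}(\mv{E}_Q^\perp)=\{\mv{0}\}$, equivalently $\mv{E}_Q=\bigl(\mv{E}_Q\cap\operatorname{row}(\mv{A})\bigr)\oplus\bigl(\mv{E}_Q\cap\ker\mv{A}\bigr)$, under which your argument for the $\Ac\Ac$ block transfers verbatim to the $\A\A$ block and completes the proof.
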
 
\begin{proof}
	We have $rank(\mv{Q}^*) = rank(\mv{Q}) =n-s$ since $\mv{T}$ is orthonormal matrix. Further, using the eigen-decomposition of $\mv{Q}$ we can express $\mv{Q}^*$ as
	$$
	\mv{Q}^* = \begin{bmatrix}
		\mv{T}_{\A} \\
		\mv{T}_{\Ac}
	\end{bmatrix} 
	\begin{bmatrix}
		\mv{E}_{0^c} \\
		\mv{E}_{0}
	\end{bmatrix} 
	\begin{bmatrix}
		\mv{\Lambda} & \mv{0}\\
		\mv{0} & \mv{0}
	\end{bmatrix} 
	\begin{bmatrix}
		\mv{E}_{0^c} \\
		\mv{E}_{0}
	\end{bmatrix}^\top
	\begin{bmatrix}
		\mv{T}_{\A} \\
		\mv{T}_{\Ac}
	\end{bmatrix}^\top, 
	$$
	where $\mv{\Lambda}$ is a diagonal matrix with the non-zero eigenvalues of $\mv{Q}$. Since $rank(\mv{A}\mv{E}_0)=k_0$ it follows that also $rank(\mv{T}_{\A}\mv{E}_0)=k_0$ and $rank(\mv{T}_{\Ac}\mv{E}_0)=s-k_0$. 
	By Theorem 4.3.28 of \cite{horn2012matrix} there exists an eigenvector, $\mv{e}$, of $\mv{Q}_{\Ac\Ac}^{*}$ that has a corresponding eigenvalue $0$ if and only if $\mv{Q}_{\Ac\Ac}^{*}\mv{e}=\mv{0}$ and $\mv{Q}_{\A\Ac}^{*}\mv{e}=\mv{0}$. 
	By construction, any vector constructed by the linear span of $\mv{T}_{\Ac}\mv{E}_0$ satisfies this requirement, and no other vector does. Hence, the rank of $\mv{Q}_{\Ac\Ac}^*$ is $n-k-(s-k_0)$ and the rank of  $\mv{Q}^*_{\A\A}$ is $k-k_0$.
\end{proof}

\begin{proof}[Proof of Theorem \ref{thm:piXAX}]
	To derive the distribution we note that the conditional distribution  of  $\mv{X}_{\Ac}^*|\mv{X}_\A^*$  is proportional to $\exp(-\frac1{2}Q(\mv{x}^*))$, where 
	\begin{align*}
		Q(\mv{x}^*) &= \begin{bmatrix}
			\mv{x}^*_{\Ac} - \mv{\mu}^*_{\Ac} \\
			\mv{x}^*_{\A} - \mv{\mu}^*_{\A}
		\end{bmatrix}^\top
		\begin{bmatrix}
			\mv{Q}_{\Ac\Ac}^{*} & \mv{Q}_{\A\Ac}^{*} \\
			\mv{Q}_{\A\Ac}^{*} & \mv{Q}_{\A\A}^{*}
		\end{bmatrix} 
		\begin{bmatrix}
			\mv{x}^*_{\Ac} - \mv{\mu}^*_{\Ac}\\
			\mv{x}^*_{\A} - \mv{\mu}^*_{\A}
		\end{bmatrix} \\
		&=
		\left( \mv{x}^{*}_{\Ac} - \mv{\mu}^*_{\Ac}\right)^\top 	\mv{Q}_{\Ac\Ac}^{*} \left(	\mv{x}^*_{\Ac} - \mv{\mu}^*_{\Ac}\right) + 2  \left(\mv{x}^{*}_{\Ac}-\mv{\mu}^*_{\Ac}\right)^\top \mv{Q}_{\Ac\A}^{*}  \left(\mv{x}^*_{\A} -\mv{\mu}^*_{\A}\right) + C,
	\end{align*}
	where $C$ is a constant independent of $\mv{x}_{\Ac}^*$. Now, since $\mv{Q}_{\Ac\Ac}^* \left( \mv{Q}_{\Ac\Ac}^{*} \right)^{\dagger} \mv{Q}^{*}_{\Ac\A}= \mv{Q}^{*}_{\Ac\A}$, the quadratic form $Q(\mv{x}^*)$ can be written as a constant plus $\mv{v}^\top\mv{Q}_{\Ac\A}^{*} \mv{v}$, where 
	$$
	\mv{v} =  \mv{x}^*_{\Ac} - \mv{\mu}^*_{\Ac}+\mv{Q}_{\Ac\Ac}^{*\dagger} \mv{Q}_{\Ac\A}^{*} (\mv{x}^*_{\A}- \mv{\mu}^*_{\A}).
	$$
	Hence 
	\begin{equation}\label{eq:Xconditional}
		\mv{X}_{\Ac}^*|\mv{X}^*_\A=\mv{b}^* \sim  \mathcal{N}_C\left(\mv{Q}_{\Ac\Ac}^*\left(
		\mv{\mu}^*_{\Ac}-\mv{Q}_{\Ac\Ac}^{*\dagger} \mv{Q}_{\Ac\A}^{*} \left( \mv{b}^* -  \mv{\mu}^*_{\A}\right)\right)
		, \mv{Q}_{\Ac\Ac}^* \right).
	\end{equation}
	Combining this with the fact that $\mv{X}=\mv{T}^\top \mv{X}^*$ gives the desired expression for the distribution of $\mv{X}|\mv{AX}=\mv{b}$. Finally, since $\mv{T}$ is orthonormal, the rank of $\mv{Q}_{X|b}$ is the same as the rank of $\mv{Q}_{\Ac\Ac}$, and the result follows from Lemma \ref{lem:rank}.
\end{proof}

We start by proving Theorem~\ref{Them:piXgby} as we will use result from that proof in  the proof of Theorem \ref{Them:piAXsoft}.

\begin{proof}[Proof of Theorem~\ref{Them:piXgby}]
	First, note that the density 
	\begin{align}\label{eq:piYcond}
		\pi_{\mv{Y}|\mv{X}_\Ac^*,\mv{X}_{\A}^*}(\mv{y}|\mv{x}_{\Ac}^*,\mv{b}^*) &= \frac{1}{(2\pi)^{\frac{m}{2}}\sigma_Y^{m}}\exp \left( -\frac{1}{2\sigma_Y^2}\left( \mv{y}- \mv{B}^*
		\begin{bmatrix}
			\mv{b}^* \\ 
			\mv{x}^*_{\Ac} 
		\end{bmatrix}
		\right)
		\left( \mv{y}- \mv{B}^*
		\begin{bmatrix}
			\mv{b}^* \\ 
			\mv{x}^*_{\Ac} 
		\end{bmatrix}
		\right)\right),
	\end{align}
	can, as a function of $\mv{x}_{\Ac}^*$, be written as
	\begin{align*}
		\pi_{\mv{Y}|\mv{X}_\A^*,\mv{X}_{\Ac}^*}(\mv{y}|\mv{b}^*, \mv{x}_{\Ac}^* ) 
		\propto& \exp\left (-\frac{\mv{x}^{*\top}_{\Ac}\mv{B}^{*\top}_{\Ac} \mv{B}^*_{\Ac}\mv{x}^*_{\Ac}}{2\sigma_Y^2} + \frac{\mv{y}^{*\top}\mv{B}^*_{\Ac}\mv{x}^*_{\Ac}}{\sigma_Y^2} \right).
	\end{align*}
	Further, from \eqref{eq:Xconditional}, we have that, as a function of $\mv{x}_{\Ac}^*$,  
	\begin{align*}
		&\pi_{\mv{X}^*_{\Ac}|\mv{X}^*_\A}(\mv{x}^*_{\Ac}|\mv{b}^*) \propto
		\exp \left( -\frac{1}{2}  \left( \mv{x}^*_{\Ac}- \widetilde{\mv{\mu}}^*_{\Ac} \right)^\top  \mv{Q}^*_{\Ac\Ac}  \left( \mv{x}^*_{\Ac}- \widetilde{\mv{\mu}}^*_{\Ac} \right) \right),
	\end{align*}
	where $\widetilde{\mv{\mu}}^*_{\Ac} =  \mv{\mu}^*_{\Ac}-\mv{Q}_{\Ac\Ac}^{*\dagger} \mv{Q}_{\Ac\A}^{*} \left( \mv{b}^* -  \mv{\mu}^*_{\A}\right)$.
	Since $\pi_{\mv{X}_{\Ac}^*|\mv{Y},\mv{X}^*_{\A} } (\mv{x}^*_{\Ac}| \mv{y},\mv{b}^*)$ is proportional  to 
	$
	\pi_{\mv{Y}|\mv{X}_\A^*,\mv{X}_{\Ac}^*}(\mv{y}|\mv{b}^*, \mv{x}_{\Ac}^* ) \pi(\mv{x}^*_{\Ac}|\mv{b}^*),
	$
	it follows that
	\begin{align*}
		\pi_{\mv{X}_{\Ac}^*|\mv{Y},\mv{X}^*_{\A} } (\mv{x}^*_{\Ac}| \mv{y},\mv{b}^*)  \propto
		&\exp\left (-\frac{1}{2}\mv{x}^{*\top}_{\Ac} \frac{\mv{B}^{*\top}_{\Ac} \mv{B}^*_{\Ac}}{\sigma_Y^2} \mv{x}^*_{\Ac}+ \left( \frac{\mv{B}^{*\top}_{\Ac} \mv{y}^{*}}{\sigma_Y^2} \right)^{\top}\mv{x}^*_{\Ac} \right) \cdot \\  
		& \exp\left(-\frac{1}{2} \mv{x}^{*\top}_{\Ac} \mv{Q}^*_{\Ac\Ac} \mv{x}^*_{\Ac}+ \left(  \mv{Q}^*_{\Ac\Ac} \widetilde{\mv{\mu}}^*_{\Ac}\right)^{\top} \mv{x}^*_{\Ac}   \right) \\
		\propto 	&\exp\left(- \frac{1}{2} \left(\mv{x}_{\Ac}^*-\widehat{\mv{\mu}}_{\Ac}^* \right)^\top \widehat{\mv{Q}}_{\Ac\Ac}^*
		\left(\mv{x}_{\Ac}^*-\widehat{\mv{\mu}}_{\Ac}^* \right)\right).
	\end{align*}
	Finally, using the relation $\mv{X}= \mv{T}^\top\mv{X}^*$ completes the proof.
\end{proof}

\begin{proof}[Proof of Theorem \ref{Them:piAXsoft}]
	First note that $\pi_{\mv{Y} | \mv{AX}}\left(  \mv{y} | \mv{b} \right)= \pi_{\mv{Y} | \mv{X}_\A^*}(\mv{y}|\mv{b}^*)$ and
	\begin{align}
		\pi_{\mv{Y} | \mv{X}^*_{\A}}\left(  \mv{y} | \mv{b}^* \right) &=  \int  \pi_{\mv{X}^*_{\Ac},\mv{Y}| \mv{X}^*_{\A} }\left(\mv{x}^*_{\Ac}, \mv{y}| \mv{b}^* \right) d\mv{x}^*_{\Ac} \notag\\
		&=  \int  \pi_{\mv{Y}|\mv{X}^*_{\Ac}, \mv{X}^*_{\A}}(\mv{y}|\mv{x}_{\Ac}^*, \mv{b}^*) \pi_{\mv{X}^*_{\Ac}| \mv{X}^*_{\A}}(\mv{x}_{\Ac}^*| \mv{b}^*) d\mv{x}^*_{\Ac}.	\label{eq:decomposePI}
	\end{align}
	The goal is now to derive an explicit form of the density by evaluating the integral in \eqref{eq:decomposePI}. 
	By the expressions in the proof of Theorem~\ref{Them:piXgby} we have
	\begin{align*}
		\pi_{\mv{Y}|\mv{X}^*_{\Ac}, \mv{X}^*_{\A}}(\mv{y}|\mv{x}_{\Ac}^*, \mv{b}^*) \pi_{\mv{X}^*_{\Ac}| \mv{X}^*_{\A}}(\mv{x}_{\Ac}^*| \mv{b}^*) =&  
		\exp\left (-\frac{1}{2}\mv{x}^{*\top}_{\Ac} \frac{\mv{B}^{*\top}_{\Ac} \mv{B}^*_{\Ac}}{\sigma_Y^2} \mv{x}^*_{\Ac}+ \left( \frac{\mv{B}^{*\top}_{\Ac} \mv{y}^{*}}{\sigma_Y^2} \right)^{\top}\mv{x}^*_{\Ac} \right) \cdot \\  
		& \exp\left(-\frac{1}{2} \mv{x}^{*\top}_{\Ac} \mv{Q}^*_{\Ac\Ac} \mv{x}^*_{\Ac}+ \left(  \mv{Q}^*_{\Ac\Ac} \widetilde{\mv{\mu}}^*_{\Ac}\right)^{\top} \mv{x}^*_{\Ac}   \right) \cdot \\
		& \frac{ |\mv{Q}^*_{\Ac\Ac}|^{\dagger/2}}{\left(2\pi\right)^{c_0} \sigma^{m}_Y} \exp \left(- \frac{1}{2}\left[  \frac{\mv{y}^{*\top}\mv{y}^*}{\sigma^2_Y} +  \widetilde{\mv{\mu}}_{\Ac}^{*\top}  \mv{Q}^*_{\Ac\Ac}  \widetilde{\mv{\mu}}^*_{\Ac} \right] \right) \\
		=&\pi_{\mv{X}^*_{\Ac}|\mv{Y}, \mv{X}^*_{\A}}(\mv{x}_{\Ac}^*| \mv{y},\mv{b}^*)  \frac{\exp \left( \frac{1}{2}  \widehat{\mv{\mu}}_{\Ac}^{*\top} \widehat{\mv{Q}}^*_{\Ac\Ac} \widehat{\mv{\mu}}^*_{\Ac} \right) }{|\widehat{\mv{Q}}_{\Ac\Ac}^* |^{\dagger/2}}
		\cdot \\
		& 	\frac{ |\mv{Q}^*_{\Ac\Ac}|^{\dagger/2}}{\left(2\pi\right)^{c_1} \sigma^{m}_Y} \exp \left(- \frac{1}{2}\left[  \frac{\mv{y}^{*\top}\mv{y}^*}{\sigma^2_Y} +  \mv{\mu}_{\Ac}^{*\top}  \mv{Q}^*_{\Ac\Ac}\mv{\mu}^*_{\Ac}\right]\right),
	\end{align*}
	where $c_0$ and $c_1$ are positive constants. Inserting this expression in \eqref{eq:decomposePI} and evaluating the integral, where one notes that   $\pi_{\mv{X}^*_{\Ac}|\mv{Y}, \mv{X}^*_{\A}}(\mv{x}_{\Ac}^*| \mv{y},\mv{b}^*) $ integrates to one, gives the desired result. 
\end{proof}

\section{Conditional constrained distribution}\label{app:conditional}
In order to derive the conditional density $\pi(\mv{x}|\mv{Ax}=\mv{b})$ in \eqref{eq:old_likelihood}  we will use what is known as the disintegration technique. The proof is built on the results in  \cite{chang1997conditioning}, which has the following definition.
\begin{Definition}
	\label{def:dis}
	Let $(\mathcal{X}, \mathcal{A},\lambda)$ and $(\mathcal{T}, \mathcal{B},\mu)$ be two measure spaces with $\sigma$-finite measures $\lambda$ and $\mu$. 
	The measure $\lambda$ has a disintegration $\{\lambda_{b}\}$ with respect to the measurable map $A:(\mathcal{X}, \mathcal{A}) \rightarrow (\mathcal{T}, \mathcal{B})$ and the measure $\mu$, or a $(A(x),\mu)-$disintegration if:
	\begin{itemize}
		\item [(i)] $\lambda_{b}$ is a $\sigma$-finite measure on $\mathcal{A}$ such that $\lambda_{b}\left(A(x)\neq b\right)=0,$ for $\mu-$almost all $b$,
	\end{itemize}
	and, for each non-negative measurable function $f$ on $\mathcal{X}$:
	\begin{itemize}
		\item [(ii)] $b\rightarrow \int f d\lambda_b$ is measurable.
		\item[(iii)] $\int f d\lambda =\int \int  f d\lambda_b d\mu$.
	\end{itemize}	
\end{Definition}

In the following theorem, we use the notation from Appendix \ref{app:proof} and  let $\lambda_n$ denote the Lebesgue measure on $\mathbb{R}^n$. Further, we define $\lambda_{\Ac}$ as the image measure of the projection onto the image of $\mv{A}$ (which is not $\sigma$-finite), and  $\lambda_{\A}$ as the image measure of the projection onto the null-space of $\mv{A}$.

\begin{Theorem}
	\label{thm:AXborig}
	Let $\mv{X}$ be a multivariate random variable with distribution $\mathbb{P}$ on $(\mathbb{R}^n,\mathcal{B}(\mathbb{R}^n))$, where  $\mathbb{P} $ has density $\pi \left( \mv{x}\right)$ with respect to $\lambda_n$ . Then the random variable $\mv{X} | \mv{AX}=\mv{b}$ has density
	$$
	\pi\left(\mv{x}| \mv{Ax}=\mv{b}\right) = \frac{\mathbb{I}\left(\mv{Ax}=\mv{b}\right)|\mv{A}\mv{A}^\top|^{-1/2}\pi(\mv{x})}{\pi_{\mv{AX}}(\mv{b})},
	$$
	with respect to the measure $\mathcal{L}_{\mv{b}}(\cdot)=  \lambda_{\Ac}(\cdot \cap \{x:\mv{Ax}=\mv{b}\})$ on  $(\mathbb{R}^n,\mathcal{B}(\mathbb{R}^n))$.
\end{Theorem}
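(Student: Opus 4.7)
The plan is to establish a disintegration of the Lebesgue measure $\lambda_n$ along the linear map $\mv{x}\mapsto\mv{Ax}$, and then combine it with the density $\pi$ to extract the conditional law. Concretely, I want to verify
\begin{equation*}
\int_{\mathbb{R}^n} f(\mv{x})\,d\lambda_n(\mv{x}) \;=\; \int_{\mathbb{R}^k} |\mv{A}\mv{A}^\top|^{-1/2}\left(\int f\,d\mathcal{L}_{\mv{b}}\right) d\lambda_k(\mv{b}),
\end{equation*}
which exhibits $\{|\mv{A}\mv{A}^\top|^{-1/2}\mathcal{L}_{\mv{b}}\}_{\mv{b}\in\mathbb{R}^k}$ as a disintegration of $\lambda_n$ in the sense of Definition~1, with base measure $\mu=\lambda_k$ on the image of $\mv{A}$.

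The construction rides on the orthonormal basis $\mv{T}$ from Section~3. Setting $\mv{x}^*=\mv{T}\mv{x}$, invariance of $\lambda_n$ together with Fubini gives $d\lambda_n = d\lambda_k(\mv{x}^*_\A)\otimes d\lambda_{n-k}(\mv{x}^*_\Ac)$, and the constraint $\mv{Ax}=\mv{b}$ becomes $\mv{H}\mv{x}^*_\A=\mv{b}$ with $\mv{H}=(\mv{A}\mv{T}^\top)_{\A\A}=\mv{U}\mv{S}_{\A\A}$. Since $\mv{A}\mv{A}^\top=\mv{U}\mv{S}^2\mv{U}^\top$ by the SVD in Algorithm~1, we have $|\det\mv{H}|=|\mv{A}\mv{A}^\top|^{1/2}$, so the substitution $\mv{x}^*_\A = \mv{H}^{-1}\mv{b}$ produces the claimed Jacobian. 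The leftover integration over $\mv{x}^*_\Ac$ then sweeps the affine level set $\{\mv{x}:\mv{Ax}=\mv{b}\}$ in precisely the way $\mathcal{L}_{\mv{b}}$ prescribes, since $\mv{T}^\top_{\Ac,\cdot}$ maps $\mathbb{R}^{n-k}$ isometrically onto the null space of $\mv{A}$.

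With the disintegration in place, multiplying through by $\pi$ and rearranging yields
\begin{equation*}
\int f\,d\mathbb{P} \;=\; \int_{\mathbb{R}^k}\pi_{\mv{AX}}(\mv{b})\left(\int f(\mv{x})\,\frac{|\mv{A}\mv{A}^\top|^{-1/2}\pi(\mv{x})}{\pi_{\mv{AX}}(\mv{b})}\,d\mathcal{L}_{\mv{b}}(\mv{x})\right) d\lambda_k(\mv{b}).
\end{equation*}
The marginal of $\mv{AX}$ has density $\pi_{\mv{AX}}$ with respect to $\lambda_k$, so almost-sure uniqueness of the conditional distribution forces the inner density to be the conditional density of $\mv{X}\mid\mv{AX}=\mv{b}$. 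Attaching the indicator $\mathbb{I}(\mv{Ax}=\mv{b})$ is merely cosmetic since $\mathcal{L}_{\mv{b}}$ is already supported on the level set.

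The main obstacle I anticipate is the measure-theoretic bookkeeping around Definition~1, in particular that $\lambda_\Ac$ is not $\sigma$-finite on all of $\mathbb{R}^n$. I sidestep this by verifying $\sigma$-finiteness of each $\mathcal{L}_{\mv{b}}$ separately, which is automatic once it is identified with Lebesgue measure on an $(n-k)$-dimensional affine subspace, and by reading measurability of $\mv{b}\mapsto\int f\,d\mathcal{L}_{\mv{b}}$ directly off the change-of-variables computation. Beyond this, the proof is essentially Jacobian bookkeeping for $\mv{T}$.
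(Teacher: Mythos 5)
Your proposal is correct and follows essentially the same route as the paper: both establish a disintegration of $\lambda_n$ along $\mv{x}\mapsto\mv{Ax}$ in the sense of Definition~\ref{def:dis} via the orthonormal basis $\mv{T}$ and the Jacobian $|\det\mv{H}|=|\mv{A}\mv{A}^\top|^{1/2}$, then combine it with the density $\pi$ to read off the conditional law. The only cosmetic differences are that you absorb the factor $|\mv{A}\mv{A}^\top|^{-1/2}$ into the disintegrating measures and argue the last step by uniqueness of conditional distributions, whereas the paper keeps $\mathcal{L}_{\mv{b}}$ unnormalized (Lemma~\ref{lem:disintegration}) and invokes Theorem~3(v) of \cite{chang1997conditioning} for that step.
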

The proof is based on the following lemma.
\begin{Lemma}\label{lem:disintegration}
	The measure $\mathcal{L}_{\mv{b}}(\cdot)$  is the $(\mv{A},\mathcal{L}^k)$-disintegration of the Lebesgue measure $\lambda_n$.
\end{Lemma}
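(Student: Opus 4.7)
The strategy is to verify the three conditions of Definition~\ref{def:dis} by an explicit orthogonal change of basis that diagonalises $\mv{A}$ and reduces (iii) to Fubini plus a linear substitution. Take any SVD $\mv{A} = \mv{U}\mv{S}\mv{V}^\top$ with $\mv{V}\in\R^{n\times n}$ orthogonal and $\mv{S}=[\mv{S}_1\ \mv{0}]$, $\mv{S}_1$ a $k\times k$ diagonal matrix with strictly positive entries (possible since $\rank(\mv{A})=k$). Write $\mv{x}=\mv{V}\mv{y}$ and split $\mv{y}=(\mv{y}_1,\mv{y}_2)\in\R^k\times\R^{n-k}$; then $\mv{A}\mv{x} = \mv{U}\mv{S}_1\mv{y}_1 =: \mv{H}\mv{y}_1$, so the level set $L_{\mv{b}}:=\{\mv{x}:\mv{A}\mv{x}=\mv{b}\}$ is the image of the isometric embedding $\iota_{\mv{b}}:\mv{y}_2\mapsto \mv{V}\bigl(\mv{H}^{-1}\mv{b},\mv{y}_2\bigr)$, and $\mathcal{L}_{\mv{b}}$ coincides with the pushforward $(\iota_{\mv{b}})_*\lambda_{n-k}$ on $(\R^n,\mathcal{B}(\R^n))$.

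Conditions (i) and (ii) are then almost immediate. For (i), $\mathcal{L}_{\mv{b}}$ is $\sigma$-finite because $\lambda_{n-k}$ is, and by construction it is supported on $L_{\mv{b}}$, so $\mathcal{L}_{\mv{b}}(\mv{A}\mv{x}\neq\mv{b})=0$ for every $\mv{b}$. For (ii), any measurable $f\geq 0$ gives
\begin{equation*}
\int f\, d\mathcal{L}_{\mv{b}} \;=\; \int_{\R^{n-k}} f\bigl(\mv{V}(\mv{H}^{-1}\mv{b},\mv{y}_2)\bigr)\,d\lambda_{n-k}(\mv{y}_2),
\end{equation*}
whose measurability in $\mv{b}$ is a standard consequence of Fubini applied to the jointly measurable, non-negative integrand $(\mv{b},\mv{y}_2)\mapsto f(\mv{V}(\mv{H}^{-1}\mv{b},\mv{y}_2))$.

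For (iii), I combine three steps: (a) $\mv{x}=\mv{V}\mv{y}$ has unit Jacobian since $\mv{V}$ is orthogonal; (b) Fubini factors $d\lambda_n(\mv{y}) = d\lambda_k(\mv{y}_1)\,d\lambda_{n-k}(\mv{y}_2)$; (c) the linear substitution $\mv{b} = \mv{H}\mv{y}_1$ has Jacobian $|\mv{H}| = |\mv{A}\mv{A}^\top|^{1/2}$, so $d\lambda_k(\mv{y}_1) = |\mv{A}\mv{A}^\top|^{-1/2}d\lambda_k(\mv{b})$. Chaining these,
\begin{equation*}
\int f\,d\lambda_n \;=\; \int_{\R^k}\!\Bigl(\int f\,d\mathcal{L}_{\mv{b}}\Bigr)|\mv{A}\mv{A}^\top|^{-1/2}\,d\lambda_k(\mv{b}) \;=\; \int_{\R^k}\!\Bigl(\int f\,d\mathcal{L}_{\mv{b}}\Bigr)d\mathcal{L}^k(\mv{b}),
\end{equation*}
provided $\mathcal{L}^k$ is identified with $|\mv{A}\mv{A}^\top|^{-1/2}\lambda_k$. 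This is condition (iii), and the constant Jacobian produced here is precisely the $|\mv{A}\mv{A}^\top|^{-1/2}$ that appears in the density of Theorem~\ref{thm:AXborig}.

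The only real subtlety — and what I would flag as the main obstacle — is a bookkeeping one: making $\mathcal{L}_{\mv{b}}$ unambiguous as a $\sigma$-finite Borel measure on the ambient $(\R^n,\mathcal{B}(\R^n))$ rather than merely an intrinsic measure on $L_{\mv{b}}$, so that the identity in (iii) makes sense as written. The pushforward formulation $\mathcal{L}_{\mv{b}} = (\iota_{\mv{b}})_*\lambda_{n-k}$ settles this cleanly, and independence of the construction from the particular SVD chosen follows from the orthogonal invariance of Lebesgue measure. Beyond that, no machinery is needed besides Fubini and a linear change of variables.
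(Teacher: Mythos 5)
Your proof is correct and follows essentially the same route as the paper's: an SVD-adapted orthogonal change of basis, Tonelli/Fubini for measurability of the partial integral in $\mv{b}$, and the linear substitution $\mv{b}=\mv{H}\mv{y}_1$ producing the Jacobian $|\mv{A}\mv{A}^\top|^{1/2}$. The only difference is a bookkeeping convention—you absorb the constant $|\mv{A}\mv{A}^\top|^{-1/2}$ into the mixing measure $\mathcal{L}^k$ and take $\mathcal{L}_{\mv{b}}$ to be the intrinsic surface measure, whereas the paper carries a factor $\left||\mv{H}|\right|$ inside $\mathcal{L}_{\mv{b}}$ and integrates against plain $d\mv{b}$—and both conventions lead to the same conditional density in Theorem~\ref{thm:AXborig}.
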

\begin{proof}
	Thus we need to show that (i), (ii), and (iii) of Definition \ref{def:dis} holds.
	Clearly, (i) follows immediate from  $\cdot \cap \{x:\mv{Ax}=\mv{b}\}$.
	To show (ii), note that 
	\begin{align*}
		\int f d\mathcal{L}_{\mv{b}}  &= \int f\left(\mv{T}^\top\mv{x}^* \right)
		\mathbb{I}_{\mv{A\mv{T}^\top\mv{x}^*}=\mv{b}}\left(d\mv{x}^* \right) d\mv{x}^*_{\Ac} 
		\\ &=\left| |\mv{H}| \right|\int_{\{\mv{x}^*:\mv{x}_\A^*=\mv{H}^{-1}\mv{b}\}} f^*(\mv{x}_{\A}^*,\mv{x}_{\Ac}^*) d\mv{x}^*_{\Ac} 
		= \left| |\mv{H}| \right| \int f^*(\mv{H}^{-1}\mv{b},\mv{x}_{\Ac}^*)d\mv{x}^*_{\Ac},
	\end{align*}
	where  $\mv{H} = \bigl(\mv{A}\mv{T}^\top\bigr)_{\A\A}$ as defined in Section~\ref{sec:sampling_new}, $\left| |\mv{H}| \right|$ denotes the absolute value of the determinant of $\mv{H}$, and 
	$f^*(\mv{x}^*)=f(\mv{T}^\top\mv{x})$. Since $f$ is a measurable function it follows by Tonelli Theorem \cite{pollard2002user} that above partial integral is measurable. Finally, to show (iii), we continue from the equation above and get
	\begin{align*}
		\iint f d\mathcal{L}_{\mv{b}} d\mv{b} 
		&= \int  \left| |\mv{H}| \right|  \int f^*(\mv{H}^{-1}\mv{b},\mv{x}_{\Ac}^*)d\mv{x}^*_{\Ac}d\mv{b}\\
		&=  \left| |\mv{H}| \right|   \left| |\mv{H}|^{-1} \right| \iint  f^*(\mv{b}^*,\mv{x}_{\Ac}^*)d\mv{x}^*_{\Ac}d\mv{b}^* 
		= \int f d\lambda_n.
	\end{align*}
\end{proof}
\begin{proof}[Proof of Theorem~\ref{thm:AXborig}]
	By Lemma \ref{lem:disintegration} above and  Theorem 3 (v) in \cite{chang1997conditioning} it follows that the random variable has density
	\begin{align*}
		\pi(\mv{x}|\mv{AX}=\mv{b}) &= \frac{\pi(\mv{x})}{\mathcal{L}_{\mv{b}}\pi(\mv{x})}=\frac{\pi(\mv{x})}{\pi_{\mv{AX}}(\mv{b})\left||\mv{H}|\right|}  =\frac{|\mv{A}\mv{A}^\top|^{-1/2}\pi(\mv{x})}{\pi_{\mv{AX}}(\mv{b})} ,
	\end{align*}
	a.e. with respect to $\mathcal{L}_{\mv{b}}$. Finally, it holds that $\pi(\mv{x}|\mv{AX}=\mv{b}) = \mathbb{I}\left(\mv{Ax}=\mv{b}\right) \pi(\mv{x}|\mv{AX}=\mv{b})  $ a.e. since $\mathcal{L}_{\mv{b}}(\cdot)=  \lambda_{\Ac}(\cdot \cap \{\mv{x}:\mv{Ax}=\mv{b}\})$.
\end{proof}

\printbibliography

\end{document}